\tikzstyle{pl}=[place,minimum size=6mm]
\tikzstyle{tr}=[transition,minimum size=5mm]
\newcommand{\funsym}[1]{\ensuremath{\mathtt{#1}}}
\newcommand{\setsym}[1]{\ensuremath{\mathit{#1}}}
\newcommand{\cname}[1]{\ensuremath{\mathtt{#1}}\xspace} % constant name
\newcommand{\I}{\ensuremath{\mathcal{I}}\xspace}
\newcommand{\J}{\ensuremath{\mathcal{J}}\xspace}
\renewcommand{\L}{\ensuremath{\mathcal{L}}\xspace}
\newcommand{\V}{\ensuremath{\mathcal{V}}\xspace}
\newcommand{\xdasharrow}[2][->,>=angle 90]{\tikz[baseline=-\the\dimexpr\fontdimen22\textfont2\relax]{\node[anchor=south,font=\scriptsize, inner ysep=1.5pt,outer xsep=2.5pt](x){\ensuremath{#2}};\draw[shorten <=3.4pt,shorten >=3.4pt,dashed,#1](x.south west)--(x.south east);}}
\newcommand{\set}[1]{\ensuremath{\left\{#1\right\}}}
\newcommand{\size}[1]{\ensuremath{\left|{#1}\right|}}
\newcommand{\sequence}[1]{\ensuremath{\left\langle #1\right\rangle}}
\newcommand{\tuple}[1]{\ensuremath{\left( #1\right)}}
\newcommand{\tup}[1]{\ensuremath{\left\langle #1\right\rangle}}
\newcommand{\naturals}{\ensuremath{\mathbb{N}}\xspace}
\newcommand{\powerset}[1]{\ensuremath{\mathcal{P}(#1)}\xspace}
\newcommand{\union}{\ensuremath{\cup}\xspace}
\newcommand{\intersect}{\ensuremath{\cap}\xspace}
\newcommand{\Union}{\ensuremath{\bigcup}\xspace}
\newcommand{\setmult}[1]{\ensuremath{#1^\oplus}\xspace}
\newcommand{\setsupp}[1]{\ensuremath{\mathit{supp}\left(#1\right)}}
\newcommand{\setsuppp}[1]{\ensuremath{\mathit{supp}(#1)}}
\newcommand{\emptybag}{\ensuremath{\emptyset}}
\newcommand{\func}[1]{\ensuremath{ #1 }}
\newcommand{\dom}[1]{\textsc{dom}(\func{#1})}
\newcommand{\rng}[1]{\textsc{rng}(\func{#1})}
\newcommand{\concat}{\cdot}
\newcommand{\seq}[1]{\ensuremath{{#1}^{*}}}
\newcommand{\emptysequence}{\ensuremath{\epsilon}}
\newcommand{\length}[1]{\size{#1}}
\newcommand{\transitionsystem}[1]{\ensuremath{\Gamma_{#1}}\xspace}
\newcommand{\states}{\ensuremath{S}\xspace}
\newcommand{\istate}{\ensuremath{s_0}\xspace}
\newcommand{\state}{\ensuremath{s}\xspace}
\newcommand{\renameop}[1]{\ensuremath{\rho_{#1}}}
\newcommand{\rename}[2]{\ensuremath{\renameop{#1}(#2)}}
\newcommand{\hideop}[1]{\ensuremath{\hat{\cname{H}}_{#1}}}
\newcommand{\hide}[2]{\ensuremath{\hideop{#1}(#2)}}
\newcommand{\algequiv}{\ensuremath \equiv_{alg}}
\newcommand{\pn}{Petri net\xspace}
\newcommand{\wfnet}{WF-net\xspace}
\newcommand{\wfnets}{WF-nets\xspace}
\newcommand{\markednet}[2]{\ensuremath{\left(#1, #2\right)}}
\newcommand{\marking}{\ensuremath{m}}
\newcommand{\markings}[1]{\ensuremath{\mathbb{M}\left(#1\right)}}
\newcommand{\places}{\ensuremath{P}}
\newcommand{\place}{\ensuremath{\mathit{p}\xspace}}
\newcommand{\transitions}{\ensuremath{T}}
\newcommand{\transition}{\ensuremath{\mathit{t}\xspace}}
\newcommand{\flow}{\ensuremath{F}}
\newcommand{\pre}[1]{\ensuremath{{}^\bullet}{#1}}
\newcommand{\post}[1]{\ensuremath{{#1}^{\bullet}}}
\newcommand{\firesym}[1]{\ensuremath{[#1\rangle}}
\newcommand{\fire}[3]{\ensuremath{#1 \firesym{#2} #3}}
\newcommand{\pnenabled}[2]{\ensuremath{#1\firesym{#2}}}
\newcommand{\pnreachable}[2]{\ensuremath{\mathcal{R}(#1,#2)}}
\newcommand{\qedboxfull}{\vrule height 5pt width 5pt depth 0pt}
\newcommand{\qedfull}{\hfill{\qedboxfull}}
\newcommand{\tpnid}{t-PNID\xspace}
\newcommand{\tpnids}{t-PNIDs\xspace}
\newcommand{\tjn}{t-JN\xspace}
\newcommand{\tjns}{t-JNs\xspace}
\newcommand{\Id}{\func{Id}\xspace}
\newcommand{\type}{\funsym{type}\xspace}
\newcommand{\invar}[1]{{\setsym{In}({#1})}}
\newcommand{\outvar}[1]{{\setsym{Out}({#1})}}
\newcommand{\var}[1]{{\setsym{Var}({#1})}}
\newcommand{\newvar}[1]{{\setsym{Emit}({#1})}}
\newcommand{\delvar}[1]{{\setsym{Collect}({#1})}}
\newcommand{\inp}{\ensuremath{\mathit{in}}}
\newcommand{\outp}{\ensuremath{\mathit{out}}}
\newcommand{\idset}{\I\xspace}
\newcommand{\varset}{\V\xspace}
\newcommand{\labelset}{\ensuremath{\Lambda}\xspace}
\newcommand{\colset}{\funsym{C}}
\newcommand{\restr}[2]{\left.#1\right|_{#2}}
\newcommand{\proj}[2]{\ensuremath{{#1}_{{\mid {#2}}}}}
\newcommand{\project}[2]{\ensuremath{\pi_{#1}\left(#2\right)}}
\newcommand{\composeOperator}{\ensuremath{\uplus}}
\newcommand{\compose}[2]{\ensuremath{#1 \composeOperator #2}}
\newcommand{\jnsequenceop}{\ensuremath ; }
\newcommand{\jnparallelop}{\ensuremath \parallel}
\newcommand{\jnchoiceop}{\ensuremath + }
\newcommand{\jnselfloopop}{\ensuremath \# }
\newcommand{\jnsequence}[2]{\ensuremath \left( #1 \jnsequenceop #2 \right) }
\newcommand{\jnparallel}[2]{\ensuremath \left( #1 \jnparallelop #2 \right)}
\newcommand{\jnchoice}[2]{\ensuremath \left( #1 \jnchoiceop #2 \right) }
\newcommand{\jnselfloop}[2]{\ensuremath \left( #1 \jnselfloopop #2 \right) }
\newcommand{\jnplace}[2]{\ensuremath \left[#1, #2 \right]}
\newcommand{\jnnewidentifierop}{\ensuremath \triangleleft}
\newcommand{\jnnewidentifier}[4]{\ensuremath \left( #1 \jnnewidentifierop \tuple{#2, #3, #4}\right)}
\g@addto@macro\normalsize{%ignore warning
    \setlength{\abovecaptionskip}{-2pt}
    \setlength{\belowcaptionskip}{12pt}
    \setlength\abovedisplayskip{3pt}
    \setlength\belowdisplayskip{3pt}
    \setlength\abovedisplayshortskip{3pt}
    \setlength\belowdisplayshortskip{3pt}
}
\newcommand{\figlabel}[1]{\label{fig:#1}}
\newcommand{\tbllabel}[1]{\label{tbl:#1}}
\newcommand{\deflabel}[1]{\label{def:#1}}
\newcommand{\thmlabel}[1]{\label{thm:#1}}
\newcommand{\corlabel}[1]{\label{cor:#1}}
\newcommand{\seclabel}[1]{\label{sec:#1}}
\newcommand{\figref}[1]{Fig.~\ref{fig:#1}}
\newcommand{\tblref}[1]{Tbl.~\ref{tbl:#1}}
\newcommand{\defref}[1]{Def.~\ref{def:#1}}
\newcommand{\thmref}[1]{Thm.~\ref{thm:#1}}
\newcommand{\corref}[1]{Cor.~\ref{cor:#1}}
\newcommand{\secref}[1]{Section~\ref{sec:#1}}
\newcounter{theoremCounter}
\newcounter{lemmaCounter}
\newcounter{definitionCounter}
\newcounter{corollaryCounter}
\newtheorem{theorem}[theoremCounter]{Theorem}
\newtheorem{lemma}[lemmaCounter]{Lemma}
\newtheorem{definition}[definitionCounter]{Definition}
\newtheorem{corollary}[corollaryCounter]{Corollary}
\theoremstyle{plain}
\newtheorem*{proof}{Proof}
\definecolor{blue-violet}{rgb}{0.54, 0.17, 0.89}
\definecolor{cadmiumorange}{rgb}{0.93, 0.53, 0.18}
\definecolor{yellow-green}{rgb}{0.6, 0.8, 0.2}
\definecolor{green1}{rgb}{0.12, 0.3, 0.17}
\definecolor{byzantium}{rgb}{0.44, 0.16, 0.39}
\title{On the Reconstructability and Rediscoverability of Typed Jackson Nets\\(Extended version)}
\author{Daniël Barenholz \inst{1} 
	\and Marco Montali \inst{2}
	\and Artem Polyvyanyy \inst{3}
	\and Hajo A. Reijers \inst{1}
	\and Andrey Rivkin \inst{2,4}
	\and Jan Martijn E. M. van der Werf \inst{1}}
\institute{%
	Department of Information and Computing Sciences, Utrecht University\\
	Princetonplein 5, 3584 CC Utrecht, The Netherlands\\
	\email{\{d.barenholz,h.a.reijers,j.m.e.m.vanderwerf\}@uu.nl}
	\and
	Faculty of Computer Science, Free University of Bozen-Bolzano\\
	piazza Domenicani 3, 39100, Bolzano, Italy\\
	\email{montali@inf.unibz.it}\\
	\and
	The University of Melbourne, Victoria 3010, Australia\\
	\email{artem.polyvyanyy@unimelb.edu.au}
	\and
	Department of Applied Mathematics and Computer Science,\\ Technical University of Denmark\\
	Richard Petersens Plads 321, 2800 Kgs. Lyngby, Denmark\\
	\email{ariv@dtu.dk}
}
\authorrunning{Barenholz, D. et al.}
\begin{document}
	%% Miscellaneous Setup
	%\captionsetup[subfloat]{captionskip=1pt}
	%\setlength{\belowcaptionskip}{-10pt}
	\maketitle
	
	%% Our sections
	\begin{abstract}
	%The goal of process discovery is to unravel a process model from an event log such that the discovered model describes the system well. Rediscoverability is a property stating that a process discovery algorithm is able to return a process model that is behaviorally equivalent to the original system. Only very few algorithms guarantee this property. Additionally, most process discovery algorithms assume systems to execute a single process. However, often a system executes multiple interacting processes simultaneously.
%
	%In this paper, we study typed Jackson nets, a natural extension of process trees, that are able to model such interacting processes by manipulating identifiers representing objects in a system. We also show that typed Jackson nets have a special property: reconstructability. This property states that if a typed Jackson net is deconstructed into its processes and their interactions, their composition yields a model that is bisimilar to the original net. We show that this property can be exhibited to develop a process discovery framework that guarantees rediscoverability of a model with interacting processes.
%
A process discovery algorithm aims to construct a model from data generated by historical system executions such that the model describes the system well.
Consequently, one desired property of a process discovery algorithm is \emph{rediscoverability}, which ensures that the algorithm can construct a model that is behaviorally equivalent to the original system.
A system often simultaneously executes multiple processes that interact through object manipulations.
This paper presents a framework for developing process discovery algorithms for constructing models that describe interacting processes based on typed Jackson Nets that use identifiers to refer to the objects they manipulate.
Typed Jackson Nets enjoy the \emph{reconstructability} property which states that the composition of the processes and the interactions of a decomposed typed Jackson Net yields a model that is bisimilar to the original system.
We exploit this property to demonstrate that if a process discovery algorithm ensures rediscoverability, the system of interacting processes is rediscoverable.
\end{abstract}
	%%%%%%%%%%%%%%%%%%%%%%%%%%%%%%%%%%%%%%%%%%%%%%%%%%%%%%%%%%%%%%%%%%%%%%%%%%%%%%%
\section{Introduction}
\seclabel{introduction}
%%%%%%%%%%%%%%%%%%%%%%%%%%%%%%%%%%%%%%%%%%%%%%%%%%%%%%%%%%%%%%%%%%%%%%%%%%%%%%%

% PROCESSES AND MODELS
% 2023-03-17: Artem
% 2023-01-28: Artem
%Processes are fundamental to a wide range of systems, from the brain development of children to the transformation of a caterpillar into a butterfly. 
%In this work, business processes are in focus, where a business process is a collection of activities that, when performed, aim to achieve a business objective at an organization.
Business processes are fundamental to a wide range of systems.
A business process is a collection of activities that, when performed, aims to achieve a business objective at an organization. 
Examples of business processes are an order-to-cash process at a retailer, a medical assessment process at a hospital, or a credit check process at a bank.
Business processes are modeled using process modeling languages, such as Petri nets, and used for communication and analysis purposes~\cite{vanderAalst2000}. 
Petri nets provide a graphical representation of the flow of activities within a process and can be used to model various types of concurrent and sequential behavior~\cite{Murata1989}. 
%Petri nets, however, have a limitation in that they do not capture information about data objects involved in the process.
%Although extensions like Colored Petri nets~\cite{Jensen2009}, $\nu$-Petri nets~\cite{RosaVelardo2011}, and Petri nets with Identifiers~\cite{Polyvyanyy2019} allow for the modeling of data objects, they come with a cost, as many analysis techniques become undecidable.

% PROCESS DISCOVERY
% 2023-01-27: Artem
A process discovery algorithm aims to automatically construct a model from data generated by historical process executions captured in an event log of the system, such that the model describes the system well.
A desired property of a discovery algorithm is \emph{rediscoverability}.
This property states that if a system $S$, expressed as a model $M$, generates an event log $L$, then a discovery algorithm with the rediscoverability property should construct $M$ from $L$. 
In other words, the algorithm can reverse engineer the model of the system from the data the model has generated. 
Only a few existing algorithms guarantee this property.
For example, if the model is a block-structured workflow net, and the event log is directly-follows complete, then the $\alpha$-Miner algorithm~\cite{AalstWM04} can rediscover the net that generated the event log.
Similarly, again under the assumption that the event log is directly-follows complete, Inductive Miner~\cite{Leemans2013} can rediscover process trees without duplicate transitions, self-loops, or silent transitions.

\begin{figure}[t]
\vspace{-2mm}
\centering
\includegraphics[width=\textwidth]{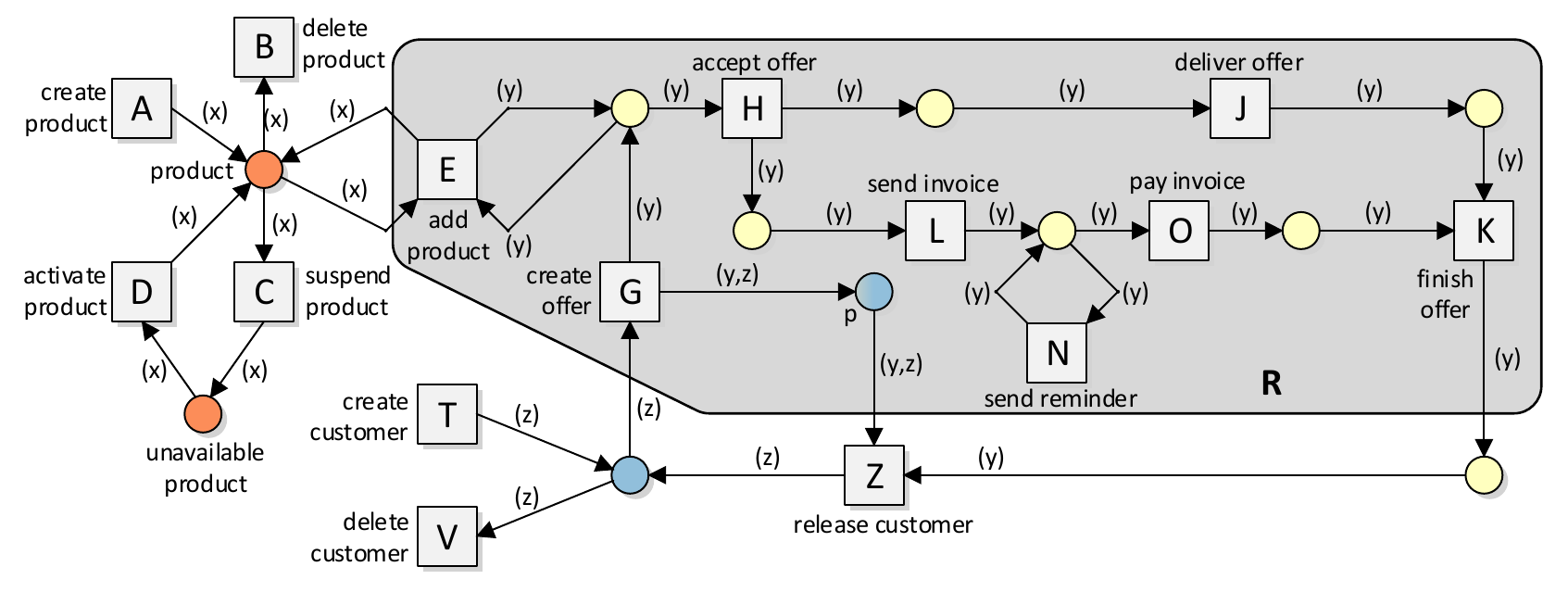}
\vspace{-1mm}
\caption{A retailer system of three interacting processes.}
\figlabel{runningexample}
\vspace{-10mm}
\end{figure}

% ONE VS MULTIPLE PROCESSES
% 2023-01-27: Artem
Most existing process discovery algorithms assume that a system executes a single process~\cite{aalst22_foundations}.
Consequently, an event log is defined as a collection of sequences where a sequence describes the execution of a single process instance.
However, many information systems, such as enterprise resource planning systems, do not satisfy this assumption.
A system often executes multiple interacting processes~\cite{Fahland2019,vanderWerf2022}.
For example, consider a retailer system that executes three processes: an order, product, and customer management process, as depicted in~\figref{runningexample}.
These processes are intertwined.
Specifically, only available products may be ordered, and customers can only have one order at a time.
Consequently, events do not belong to a single process but relate to several processes.
For instance, consider an event $e$ in some event log that occurred as transition $G$ was executed for some customer $c$ and created a new order $o$ in the system.
Event $e$ relates to the customer process instance $c$ and the order process instance $o$.
Traditional process discovery techniques require event $e$ to be stored in multiple event logs and generate multiple models, one for each process~\cite{adams22_extractingfeatures}. 

% OBJECT-CENTRIC PROCESS DISCOVERY AND AGENT SYSTEM DISCOVERY
% 2023-03-17: Artem
% 2023-01-28: Artem
A different approach is taken in artifact or object-centric process discovery~\cite{aalstB20_discovering,LuNWF15} and agent system discovery~\cite{TourPK21,TourPKS2022agentdiscovery}.
In object-centric process discovery, instead of linking each event to a single object, events can be linked to multiple objects stored in object-centric event logs~\cite{Berti2022}.
Existing object-centric discovery algorithms project the input event log on each object type to create a set of ``flattened'' event logs.
For each event log, a model is discovered, after which these models are combined into a single model~\cite{aalstB20_discovering}.
In general, flattening is lossy~\cite{adams22_extractingfeatures}, as in this step, events can disappear~\cite{aalstB20_discovering}, be duplicated (convergence)~\cite{aalst19_divergence}, or lead to wrong event orders (divergence)~\cite{aalst19_divergence}.
In agent system discovery, instead of interacting objects, a system is viewed as composed of multiple autonomous agents, each driving its processes that interact to achieve an overall objective of the system~\cite{TourPK21}.
An agent system discovery algorithm proceeds by decomposing the input event log into multiple event logs, each composed of events performed by one agent (type) and an event log of interactions, and then discovering agent and interaction models and composing them into the resulting system~\cite{TourPKS2022agentdiscovery}.

\begin{figure}[t]
	\centering
	\begin{tikzpicture}[->,>=stealth',auto,x=10mm,y=1cm,node distance=15mm and 3mm,thick,  every node/.style={scale=1.2}]
		% Rectangles.
		\node[rectangle,draw=lightgray, minimum width=8.5cm, minimum height=2.5cm] (r) at  (3.55,0.3) {};
		\node[rectangle,draw=lightgray, minimum width=8.5cm, minimum height=2.25cm] (r) at (3.55,-2.55) {};
			
		% Top: Models
		\node[tr, label=center:$M$] (M) {};
		\node[tr, right of = M, label=center:$M_1$, yshift=0.5cm, xshift=1.75cm]  (M1) {}; 
		\node[below of = M1,rotate=90,xshift=10mm] () {$\cdots$};
		\node[tr, right of = M, label=center:$M_n$, yshift=-0.5cm, xshift=1.75cm]  (Mn) {}; 
		\node[tr, right of = M, label=center:$M'$, xshift=5cm]  (M') {};
		
		% Bot: Logs+Discovered Models
		\node[tr, below of = M, label=center:$L$, yshift=-0.5cm]    (L) {}; 
		\node[tr, right of = L, label=center:$L_1$, yshift=0.5cm]   (L1) {}; 
		\node[below of = L1,rotate=90,xshift=10mm] () {$\cdots$};
		\node[tr, right of = L, label=center:$L_n$, yshift=-0.5cm]  (Ln) {};
		
		\node[tr, right of = L1, label=center:$D_1$, xshift=2cm] (D1) {};
		\node[below of = D1,rotate=90,xshift=10mm] () {$\cdots$};
		\node[tr, right of = Ln, label=center:$D_n$, xshift=2cm] (Dn) {};
		\node[tr, below of = M', label=center:$D'$, yshift=-0.5cm]  (D') {};
		
		% Text
		\node[right of = M, label=center:{\tiny{project}}, xshift=-0.4cm] () {};
		\node[right of = L, label=center:{\tiny{project}}, xshift=-0.6cm] () {};
		
		\node[left of = M', label=center:{\tiny{compose}},xshift=0.2cm] () {};
		\node[left of = D', label=center:{\tiny{compose}},xshift=0.6cm] () {};
		
		\node[right of = D', label=left:{\tiny{Section 4}}, yshift=3.25cm, xshift=-0.7cm] () {};
		\node[right of = D', label=left:{\tiny{Section 5}}, yshift=-1cm, xshift=-0.7cm] () {};
		
		% Whitespace fix
		\node[below of = Dn, yshift=7mm] () {};
		
		% Normal
		% use \path[->, thin, draw=color] to change colors.
		\path[->, thin]
			(M) edge (M1)
			(M) edge (Mn)
			
			(L) edge (L1)
			(L) edge (Ln)
			
			(M1) edge (M')
			(Mn) edge (M')
			
			(D1) edge (D')
			(Dn) edge (D')
		;
		
		%Solid Generates
		\path[->, draw=blue-violet]
		(M) edge node[anchor=center, xshift=-0.6cm, yshift=0.2cm]{\textcolor{blue-violet}{\tiny{generates}}} (L);
		
		% Dotted Generates:
		\path[<->, draw=blue-violet,dotted]
			(M1) edge node[anchor=center, xshift=1.8mm, yshift=-0.5mm]{\textcolor{blue-violet}{\tiny{(?)}}} (L1)
			(Mn) edge node[anchor=center, xshift=0.3mm, yshift=-2.5mm]{\textcolor{blue-violet}{\tiny{(?)}}} (Ln)
		;
		% Orange (?)
		\path[<->, draw=cadmiumorange, dotted]
			(M)  edge[bend left=30] node[anchor=center, yshift=1.5mm]{\textcolor{cadmiumorange}{\tiny{(?)}}}                (M')
			(M') edge               node[anchor=center, yshift=1.8mm, xshift=2.0mm]{\textcolor{cadmiumorange}{\tiny{(?)}}}  (D')
		;
		% Discovery
		\path[->, thin]
			(L1) edge node[anchor=center, yshift=1.5mm]{\tiny{discover}} (D1)
			(Ln) edge node[anchor=center, yshift=1.5mm]{\tiny{discover}} (Dn)
		;
		% Isomorphism
		\path[<->, draw=green1, dotted]
			(M1) edge node[anchor=center, yshift=-2mm, xshift=4mm]{\textcolor{green1}{\tiny{(?)}}} (D1)
			(Mn) edge node[anchor=center, xshift=-0.4mm, yshift=-2.5mm]{\textcolor{green1}{\tiny{(?)}}} (Dn)
		;
	\end{tikzpicture}
	\vspace{3mm}
	\caption{The framework for rediscoverability of systems of interacting processes.}
	\figlabel{principleIdea}
	\vspace{-10mm}
\end{figure}
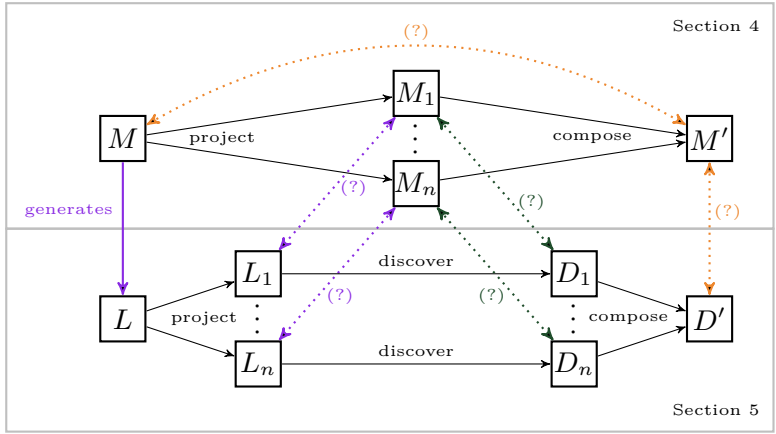
%\begin{figure}[H]
%\vspace{-2mm}
%\centering
%\includegraphics[width=.9\textwidth]{figs/principleIdea}
%\caption{The framework for rediscoverability of systems of interacting processes.}
%\figlabel{principleIdea}
%\vspace{-10mm}
%\end{figure}
%
%\newpage

% WHAN WE DO...
% 2023-03-17: Artem
% 2023-01-28: Artem
In this paper, we study under what conditions projections in event logs can guarantee rediscoverability for interacting processes, represented as typed Jackson Nets, a subclass of typed Petri nets with identifiers~\cite{PolyvyanyyWOB19ISM,vanderWerf2022}.
%The approach consists of two parts. 
%The type of models we study typed are Jackson Nets (\tjns), a subclass of \tpnids. %, in which objects ``behave well'' imposed by the structure of the nets. 
The class of typed Jackson Nets is inspired by Box Algebra~\cite{Best2002} and Jackson Nets~\cite{vanHee2009}, which are (representations of) block-structured workflow nets that are \emph{sound}~\cite{Aalst1997} by construction~\cite{Leemans2013}. 
%As we show, the class of process trees~\cite{Leemans2013} used in process mining are equivalent to Jackson Nets. 
%As the class of Jackson Nets is enclosed in the class of typed Jackson Nets, they form a natural class to study in the context of process discovery.
As we demonstrate, typed Jackson Nets exhibit a special property: they are \emph{reconstructable}.
Composing the projections of each type is insufficient for reconstructing a typed Jackson Net. 
Instead, if the subset-closed set of all type combinations is considered, the composition returns the original model of the system.
We show how the reconstructability property can be used to develop a framework for rediscoverability of typed Jackson Nets using traditional process discovery algorithms.
The framework builds upon a divide and conquer strategy, as depicted in \figref{principleIdea}. 
The principle idea of this strategy is to project an event log $L$ generated by some model $M$ of the system onto logs $L_1, \ldots, L_n$. 
Then, if these projected event logs satisfy the conditions of a process discovery algorithm, composition of the resulting models $D_1, \ldots, D_n$ into model $D'$ should rediscover the original model of the system.
In this framework, we show that every projected event log is also an event log of the corresponding projected model.
Consequently, if a process discovery algorithm guarantees the rediscoverability of projected models, then the composition operator for typed Jackson Nets can be used to ensure the rediscoverability of the original system.

% PAPER STRUCTURE
% 2023-01-28: Artem
The next section presents the basic notions. 
In \secref{typed_jackson_nets_section}, we introduce typed Jackson Nets, which, as shown in \secref{decomposability}, are reconstructable.
We define a framework for developing discovery algorithms that guarantee rediscoverability 
% The latter will be an ICPM paper ;-)
% and present one instance algorithm of this framework based on Inductive Miner 
in \secref{framework}.
We conclude the paper in \secref{discussion_conclusion}. 
%Full proofs of the lemmata and theorems can be found in~\cite{Arxiv2023}.

%%%%%%%%%%%%%%%%%%%%%%%%%%%%%%%%%%%%%%%%%%%%%%%%%%%%%%%%%%%%%%%%%%%%%%%%%%%%%%%

	\section{Preliminaries} \seclabel{preliminaries}

%% SETS
Let $S$ and $T$ be two possibly infinite sets.
% Powerset + set size
%\db{I don't think we use powersets anywhere?}
The powerset of $S$ is denoted by $\powerset{S} = \set{ S' \mid S' \subseteq S}$ and $\size{S}$ denotes the cardinality of $S$.
%The cardinality of $S$ is denoted by $\size{S}$.
% Disjoint sets
Two sets $S$ and $T$ are \emph{disjoint} if $S \cap T = \emptyset$, with $\emptyset$ denoting the empty set.
% Cartesian product
The cartesian product of two sets $S$ and $T$, is defined by $S \times T = \set{(a,b) \mid a \in S, b \in T}$.
% Generalized cartesian product
The generalized cartesian product for some set $S$ and and sets $T_s$ for $s \in S$ is defined as $\Pi_{s \in S} T_s = \set{f: S \to \bigcup_{s \in S} T_s \mid \forall s \in S: f(s) \in T_s}$.
% Range
Given a relation $R \subseteq S \times T$, its range is defined by $\rng{R} = \set{y \in T \mid \exists x \in S : (x,y)\in R}$.
% Domain
Similarly, the domain of $R$ is defined by $\dom{R} = \set{x \in S \mid \exists y \in T: (x, y) \in R}$.
% Restriction
Restricting the domain of a relation to a set $U$ is defined by $\proj{R}{U} = \{(a,b) \in R \mid a \in U \}$.

%% Multiset / Bag
A \emph{multiset} $m$ over $S$ is a mapping of the form $m:S\rightarrow \naturals$, where $\naturals = \{0, 1, 2, \ldots\}$ denotes the set of natural numbers.
For $s \in S$, $m(s) \in \naturals$ denotes the number of times $s$ appears in multiset $m$. We write $s^n$ if $m(s)=n$.
For $x \not\in S$, $m(x) = 0$.
% Set of all finite multisets
We use $\setmult{S}$ to denote the set of all finite multisets over $S$ and overload $\emptyset$ to also denote the empty multiset. The size of a multiset is defined by $|m|=\sum_{s\in S}m(s)$.
% Support
The support of $m\in\setmult{S}$ is the set of elements that appear in $m$ at least once: $\setsupp{m} = \set{s\in S\mid m(s) > 0}$.
Given two multisets $m_1$ and $m_2$ over $S$:
\begin{inparaenum}[\it (i)]
    \item $m_1 \subseteq m_2$ (resp., $m_1 \subset m_2$) iff $m_1(s) \leq m_2(s)$ (resp., $m_1(s) < m_2(s)$) for each $s \in S$;
    \item $(m_1 + m_2)(s) = m_1(s) + m_2(s)$ for each $s \in S$; and
    \item if $m_1 \subseteq m_2$, $(m_2 - m_1)(s) = m_2(s) - m_1(s)$ for each $s \in S$.
\end{inparaenum}
%

%% Sequences
% Let $R : B \rightarrow S$ be an injective function. Renaming is a function $\rho_R : \setmult B \to \setmult S$ defined element-wise by $\rho_R(m)(b) = m(a)$ if $R(a) = b$ and $\rho_R(m)(b) = 0$ otherwise.
%\todo{simplify sequences when the main draft is ready}
A \emph{sequence} over $S$ of length $n \in \naturals$ is a function $\sigma : \{1,\ldots,n\} \to S$.
If $n > 0$ and $\sigma(i) = a_i$, for $1\leq i \leq n$, we write $\sigma = \sequence{a_1, \ldots, a_n}$.
The length of a sequence $\sigma$ is denoted by $\size \sigma$.
The sequence of length $0$ is called the \emph{empty sequence}, and is denoted by $\emptysequence$.
The set of all finite sequences over $S$ is denoted by $S^*$.
We write $a \in \sigma$ if there is $1 \leq i \leq \size \sigma$ such that $\sigma(i) = a$ and $\setsupp{\sigma} = \{a \in S \mid \exists 1 \leq i \leq \size \sigma : \sigma(i) = a \}$.
\emph{Concatenation} of two sequences $\nu,\gamma \in S^*$, denoted by $\sigma = \nu \concat \gamma$, is a sequence defined by $\sigma : \{ 1, \ldots, \size \nu+\size \gamma\}\rightarrow S$, such that $\sigma(i) = \nu(i)$ for $1 \leq i \leq \size \nu$, and $\sigma(i) = \gamma(i - \size \nu)$ for $\size \nu+1 \leq i \leq \size \nu+\size \gamma$.
Projection of sequences on a set $T$ is defined inductively by $\proj{\emptysequence}{T} = \emptysequence$, $\proj{(\sequence{a}\concat\sigma)}{T} = \sequence{a}\concat\proj{\sigma}{T}$ if $a \in T$ and $\proj{(\sequence{a}\concat\sigma)}{T} = \proj{\sigma}{T}$ otherwise.
%\todo{Andy: I prefer using a substitution for a sequence and then call renaming a ``special type'' of it}
Renaming a sequence with an injective function $r : S \rightarrow T$ is defined inductively by $\rho_r(\emptysequence) = \emptysequence$, and  $\rho_r(\sequence{a}\concat\sigma) = \sequence{r(a)}\concat\rho_r(\sigma)$.
Renaming is extended to multisets of sequences as follows: given a multiset $m \in \setmult{(S^*)}$, we define $\rho_r(m) = \sum_{\sigma\in\setsupp{m}} \sigma(m)\cdot\rho_r(\sigma)$.
For example, \smash{$\rho_{\{x\mapsto a, y \mapsto b\}}(\sequence{x,y}^3) = \sequence{a,b}^3$}.

% Graphs
A \emph{directed graph} is a pair $(V,A)$ where $V$ is the set of vertices, and $A \subseteq V \times V$ the set of arcs. 
Two graphs $G_1 = (V_1, A_1)$ and $G_2 = (V_2, A_2)$ are \emph{isomorphic}, denoted by $G_1 \leftrightsquigarrow G_2$, if a bijection $b : V_1 \rightarrow V_2$ exists, such that $(v_1, v_2) \in A_1$ iff $(b(v_1), b(v_2)) \in A_2$.

% LTS
Given a finite set $A$ of (action) labels, a \emph{(labeled) transition system} (LTS) over $A$ is a tuple $\transitionsystem{A} = (\states,A,\istate, \to)$, where $S$ is the (possibly infinite) set of \emph{states}, $\istate$ is the \emph{initial state} and $\to\ \subset (\states\times (A \cup \{\tau\}) \times \states)$ is the \emph{transition relation}, where $\tau\not\in A$ denotes the silent action~\cite{vanGlabbeek1993}.
In what follows, we write $\state \xrightarrow{a} \state'$ for $(\state,a,\state') \in \to$. 
%
%\jmw{Add language here?}
%
Let $r : A \to (A' \cup \{\tau\})$ be an injective, total function.
Renaming $\transitionsystem{}$ with $r$ is defined as $\rename{r}{\transitionsystem{}} = (\states, A \setminus A', \istate, \to')$ with $(\state,r(a),\state') \in \to'$ iff $(\state,a,\state') \in \to$.
Given a set $T$, hiding is defined as $\hide{T}{\transitionsystem{}} = \rename{h}{\transitionsystem{}}$ with $h : A \rightarrow A \cup \{\tau\}$ such that $h(t) = \tau$ if $t \in T$ and $h(t)=t$ otherwise.
Given $a\in A$, $p \xdasharrow{~a~} q$ denotes a \emph{weak transition relation} that is defined as follows:
\begin{inparaenum}[\it (i)]
    \item $p \xdasharrow[->]{~a~} q$ iff $p (\xrightarrow{\tau})^* q_1\xrightarrow{a}q_2 (\xrightarrow{\tau})^* q$;
    \item $p \xdasharrow[->]{\ensuremath{~\tau~}} q$ iff $p (\xrightarrow{\tau})^* q$.
\end{inparaenum}
Here, $(\xrightarrow{\tau})^*$ denotes the reflexive and transitive closure of $\xrightarrow{\tau}$.

%\db{We don't use weak bisimulation; remove!}
%\db{This would be a good place to add isomorphism, definde on LTS. We use $\equiv$ to denote it.}
%\begin{definition}[Strong and weak bisimulation] \deflabel{bisimulation}\deflabel{strong-bisimulation}\deflabel{weak-bisimulation}
    Let $\transitionsystem{1} = (\states_1, A, \state_{01}, \to_1)$ and $\transitionsystem{2} = (\states_2, A, \state_{02},\to_2)$ be two LTSs.
    A relation $R\subseteq(\states_1 \times \states_2)$ is called a \emph{strong simulation}, denoted as $\transitionsystem{1} \prec_R \transitionsystem{2}$, if for every pair $(p,q)\in R$ and $a\in A \cup \{\tau\}$, it holds that if $p\xrightarrow{a}_1 p'$, then there exists $q'\in\states_2$ such that $q \xrightarrow{a}_2 q'$ and $(p',q')\in R$. Relation $R$ is a \emph{weak simulation}, denoted by $\transitionsystem{1} \preccurlyeq_R \transitionsystem{2}$, iff  for every pair $(p,q)\in R$ and $a\in A \cup \{\tau\}$ it holds that if $p\xrightarrow{a}_1 p'$, then $a = \tau$ and $(p', q) \in R$, or there exists $q'\in\states_2$ such that $q \xdasharrow{~a~}_{\hspace{-.5ex} 2}~q'$ and $(p',q')\in R$.
    Relation $R$ is called a strong (weak) \emph{bisimulation}, denoted by $\transitionsystem{1} \sim_R \transitionsystem{2}$ ($\transitionsystem{1} \approx_R \transitionsystem{2}$) if both $\transitionsystem{1} \prec \transitionsystem{2}$ ($\transitionsystem{1} \preccurlyeq_R \transitionsystem{2}$) and $\transitionsystem{2} \prec_{R^{-1}} \transitionsystem{1}$ ($\transitionsystem{2} \preccurlyeq_{R^{-1}} \transitionsystem{1}$).
    Given a strong (weak) (bi)simulation $R$, we say that a state $p\in\states_1$ is strongly (weakly) rooted (bi)similar to $q\in\states_2$, written $p \sim^r_R  q$ (correspondingly, $p \approx^r_R q$), if  $(p,q)\in R$.
    The relation is called \emph{rooted} iff $(s_{01}, s_{02}) \in R$. A rooted relation is indicated with a superscript $^r$.
%\end{definition}

%Finally, $\transitionsystem{1}$ is said to be strongly (weakly) bisimilar to $\transitionsystem{2}$, written $\transitionsystem{1}  \sim \transitionsystem{2}$ (correspondingly, $\transitionsystem{1}  \approx \transitionsystem{2}$), if $s_{01}  \sim s_{02}$ ($s_{01}  \approx s_{02}$).

% Definition Petri net
A weighted \pn is a 4-tuple $(\places,\transitions,\flow, W)$ where
$\places$ and $\transitions$ are two disjoint sets of \emph{places} and \emph{transitions}, respectively, $\flow \subseteq ((\places \times \transitions) \cup (\transitions \times \places))$ is the \emph{flow relation}, and $W : \flow \rightarrow \naturals^+$ is a \emph{weight function}.
For $x\in\places\cup\transitions$, we write $\pre{x}=\set{y\mid (y,x)\in\flow}$ to denote the \emph{preset} of $x$ and $\post{x}=\set{y\mid (x,y)\in\flow}$ to denote the \emph{postset} of $x$.
We lift the notation of preset and postset to sets element-wise.
If for a \pn no weight function is defined, we assume $W(f) = 1$ for all $f \in F$.
% Definition marking
A \emph{marking} of $N$ is a multiset $m \in \setmult P$, where $m(p)$ denotes the number of \emph{tokens} in place $\place \in \places$. If $m(\place) > 0$, place $\place$ is called \emph{marked} in marking $m$.
A \emph{marked \pn} is a tuple \markednet{N}{m} with $N$ a weighted \pn with marking $m$.
% Enabledness
A transition $\transition\in \transitions$ is enabled in $\markednet{N}{m}$, denoted by $\pnenabled{\markednet{N}{m}}{\transition}$ iff $W((\place,\transition)) \leq m(\place)$ for all $\place \in \pre{\transition}$. An enabled transition can \emph{fire}, resulting in marking $m'$ iff $m'(\place) + W((\place, \transition)) = m(\place) + W((\transition, \place))$, for all $\place\in\places$, and is denoted by $\fire{\markednet{N}{m}}{\transition}{\markednet{N}{m'}}$. We lift the notation of firings to sequences. A sequence $\sigma \in T^*$ is a \emph{firing sequence} iff $\sigma = \emptysequence$, or markings $m_0, \ldots, m_n$ exist such that $\fire{(N,m_{i-1})}{\sigma(i)}{(N,m_{i})}$ for $1 \leq i \leq \size \sigma = n$, and is denoted by $\fire{(N,m_0)}{\sigma}{(N,m_n)}$. If the context is clear, we omit the weighted Petri net $N$.
The set of reachable markings of \markednet{N}{m} is defined by $\pnreachable{N}{m} = \{ m' \mid \exists \sigma \in \transitions^* : \fire{m}{\sigma}{m'} \}$.
The set of all possible finite firing sequences of  \markednet{N}{m} is denoted by $\mathcal{L}(N,m_0) = \{ \sigma \in \seq{T} \mid \fire{m}{\sigma}{m'} \}$.
The semantics of a marked Petri net $(N,m)$ with $N = (\places, \transitions, \flow, W)$ is defined by the LTS $\Gamma_{N,m} = (\setmult \places, T, m_0, \to)$ with $(m,t,m') \in \to$ iff $\fire{m}{t}{m'}$.
%Isomorphism on petri nets
A Petri net $N = (P, T, F, W)$ has underlying graph $(P \cup T, F)$.
Two Petri nets $N$ and $N'$ are isomorphic, denoted using $N \leftrightsquigarrow N'$, if their underlying graphs are.

% Workflow net
A \emph{workflow net} (WF-net for short) is a tuple $N=(\places, \transitions, \flow,W,\inp,\outp)$ such that:
\begin{inparaenum}[\it (i)]
	\item $(\places, \transitions, \flow, W)$ is a weighted Petri net;
	\item $\inp,\outp\in\places$ are the source and sink place, respectively, with $\pre{\inp} =\post{\outp}= \emptyset$;
	\item every node in $\places \cup \transitions$ is on a directed path from $\inp$ to $\outp$.
\end{inparaenum}
$N$ is called \emph{$k$-sound} for some $k \in \naturals$ iff
\begin{inparaenum}[\it (i)]
	\item it is proper completing, i.e., for all reachable markings $m \in \pnreachable{N}{[\inp^k]}$, if $[\outp^k]\subseteq m$, then $m = [\outp^k]$;
	\item it is weakly terminating, i.e., for any reachable marking $m \in \pnreachable{N}{[\inp^k]}$, the final marking is reachable, i.e., $[\outp^k] \in \pnreachable{N}{m}$; and
	\item it is quasi-live, i.e., for all transitions $t\in\transitions$, there is a marking $m\in\pnreachable{N}{[\inp]}$ such that $\pnenabled{m}{t}$.
\end{inparaenum}
The net is called \emph{sound} if it is $1$-sound.
If it is $k$-sound for all $k\in\naturals$, it is called \emph{generalized sound}~\cite{vanHee2003}.

%\db{@Daniel: For later, potentially we need to add something about object centric logs here.}

	%\input{section/2_setting_the_stage.tex}
	\section{Typed Jackson Nets to Model Interacting Processes} \seclabel{typed_jackson_nets_section}
%In this section, we show that Jackson Nets are equivalent models to represent block-structured workflow nets.
In this section, we introduce typed Jackson Nets as subclass of typed Petri nets with identifiers. 
%This class can be extended to typed Jackson Nets, a subclass of typed Petri nets with identifiers.
We show that this class is a natural extension to Jackson Nets, which are representations of block-structured workflow nets.
Typed Jackson Nets are identifier sound and live by construction.%, and can therefore be used to model systems of interacting processes.

\begin{figure}[t]
	\centering
	\begin{tikzpicture}[->,>=stealth',auto,x=10mm,y=1cm,node distance=11mm and 3mm,thick,  every node/.style={scale=0.8}]
		
		\node[pl,label = below:$p_1$]              (p1) {};
		\node[tr,right of = p1, label=center:$t_1$] (t1) {};
		\node[pl,right of = t1, label=below:$p_2$] (p2) {};
		
		\node[tr,right of = p2, label=center:$t_2$,yshift=1cm,xshift=4mm]  (t2) {};
		\node[tr,right of = p2, label=center:$t_3$,yshift=-1cm,xshift=4mm] (t3) {};
		\node[pl,right of = t2, label=below:$p_3$,yshift=-1cm,xshift=4mm] (p3) {};
		\node[tr,left  of = p3, label=center:$t_4$,xshift=-4mm] (t4) {};
		
		\node[tr,right of = p3, label=center:$t_5$] (t5) {};
		\node[pl,right of = t5, label=below:$p_4$] (p4) {};
		
		% Empty node to fix caption overlap
		% change yshift to LOWER value to make whitespace LESS
		\node[below of = t3,yshift=5mm] (a) {};
		
		\path[->]
		(p1) edge (t1)
		(t1) edge (p2)
		
		(p2) edge[bend left  = 20] (t2)
		(t2) edge[bend left = 20] (p3)
		
		(p2) edge[bend right = 20] (t3)
		(t3) edge[bend right = 20] (p3)
		
		(p3) edge (t4)
		(t4) edge (p2)

		(p3) edge (t5)
		(t5) edge (p4)
		;  
	\end{tikzpicture}
	\caption{%
		An example block-structured \wfnet. Each block corresponds to a node in the Jackson type
		$
		\jnsequence{
			p_1
		}{
			\jnsequence{
				t_1
			}{
				\jnsequence{
					\jnselfloop{
						\jnsequence{
							p_2
						}{
							\jnsequence{
								\jnchoice{t_2}{t_3}
							}{
								p_3
							}
						}
					}{
						t_4
					}
				}{
					\jnsequence{
						t_5
					}{
						p_4
					}
				}
			}
		}
		$. As example, the choice between transitions $t_2$ and $t_3$ corresponds to the node %
		$
		\jnsequence{p_2}{\jnsequence{\jnchoice{t_2}{t_3}}{p_3}}
		$
		.
	}
	\label{fig:example_wfnet}
\end{figure}

\subsection{Jackson Nets} \seclabel{jackson_nets}
% Hierarchical WFN
Whereas \wfnets do not put any restriction on the control flow of activities, block-structured \wfnets divide the control flow in logical blocks~\cite{KoppMWL15_DifferenceGraphBlockStructure}. Each ``block'' represents a single unit of work that can be performed, where this unit of work is either atomic (single transition), or one involving multiple steps (multiple transitions). 
An example block-structured \wfnet is shown in \figref{example_wfnet}.
The main advantage of block-structured \wfnets, is that the block-structure ensures that the \wfnet is sound by definition~\cite{KoppMWL15_DifferenceGraphBlockStructure,Leemans2013,vanHee2009}.
In this paper, we consider Jackson Types and Jackson Nets~\cite{vanHee2009}. 
A Jackson Type is a data structure used to capture all information involved in a single execution of a \wfnet. 

%\begin{definition}[Jackson Types~\cite{vanHee2009}] A \emph{Jackson type} \J is defined recursively by $\J ::= \mathscr A \mid (\J;\J) \mid {(\J||\J)} \mid (\J+\J) \mid (\J\#\J)$, where $\mathscr{A} = \mathscr{ A}^{p} \cup \mathscr{A}^{t} = \left\{a, b, c, \ldots \right\}$ denotes two disjoint sets of atomic types for places and transitions, resp., and symbols $;, ||, +, \#$ stand for types in sequence, parallelism, choices, and loops
%\end{definition}

\begin{definition}[Jackson Type~\cite{vanHee2009}]
	The set of \emph{Jackson Types} $\J$ is recursively defined by the following grammar: %recursively as:
	\begin{align*}
		\J & ::= \mathscr A ^p \mid 
		\jnsequence{\mathscr A^p}{\jnsequence{\J^t}{\mathscr A^p}}
		\\
		\J^t & ::= \mathscr A ^t \mid 
		\jnsequence{\J^t}{\jnsequence{\J^p}{\J^t}}
		\mid \jnchoice{\J^t}{\J^t} \\
		\J^p & ::= \mathscr A ^p \mid 
		\jnsequence{\J^p}{\jnsequence{\J^t}{\J^p}} \mid 
		\jnparallel{\J^p}{\J^p} \mid 
		\jnselfloop{\J^p}{\J^t}
	\end{align*}
where $\mathscr{A} = \mathscr{A}^{p} \cup \mathscr{A}^{t} = \left\{a, b, c, \ldots \right\}$ denotes two disjoint sets of atomic types for places and transitions, resp., and symbols $\jnsequenceop, \jnparallelop, \jnchoiceop, \jnselfloopop$ stand for sequence, parallelism, choices, and loops.
%Note how these correspond with the allowed graphical constructs of block-structured \wfnets.
\end{definition}

%As the operators in Jackson Types are all binary, an algebraic equivalence relation is defined to rewrite types into a normal form. \dbtodo{Some extra text explaining what this normal form is/does and why it's there? ; different types that represent the same model (also proved in paper Kees), but they have same behaviour. So: normal form to reason about a single thing.}
Multiple Jackson Types may exist for the same \wfnet.
For example, the Jackson Type
$
\jnsequence
{
	\jnsequence{p_1}{t_1}
}
{
	\jnsequence
	{
		\jnselfloop{
			\jnsequence{
				p_2
			}{
				\jnsequence{
					\jnchoice{t_2}{t_3}
				}{
					p_3
				}
			}
		}{
			t_4
		}
	}
	{
		\jnsequence{t_5}{p_4}
	}
}
$
describes the \wfnet of \figref{example_wfnet} as well.
%Since a \wfnet may have multiple Jackson types, we use a normal form.
Each net has a unique representation~\cite{vanHee2009}, called its normal form.
We define an algebraic equivalence between types to allow rewriting into the normal form.
%The normalized Jackson type essentially pushes all parentheses to the right by employing the identities from \defref{algebraic_equivalence}.
\begin{definition}[Algebraic equivalence, normal form~\cite{vanHee2009}] \deflabel{algebraic_equivalence}
The \emph{algebraic\\ equivalence} $\algequiv$ is the smallest equivalence relation on the set of Jackson Types that satisfies the following six rules:
\begin{equation*}\begin{array}{rclcrcl}                                                                                 %
	\jnsequence{\jnsequence{J_0}{J_1}}{J_2} & \algequiv & \jnsequence{J_0}{\jnsequence{J_1}{J_2}} &\  &  %
\jnchoice{\jnchoice{J_0}{J_1}}{J_2} & \algequiv & \jnchoice{J_0}{\jnchoice{J_1}{J_2}} \\                 %
\jnparallel{\jnparallel{J_0}{J_1}}{J_2} & \algequiv & \jnparallel{J_0}{\jnparallel{J_1}{J_2}} &\  &      %
\jnchoice{J_0}{J_1} & \algequiv & \jnchoice{J_1}{J_0} \\                                                 %
\jnparallel{J_0}{J_1} & \algequiv & \jnparallel{J_1}{J_0} & \  &                                          %
\jnselfloop{\jnselfloop{J_0}{J_1}}{J_2} & \algequiv & \jnselfloop{J_0}{\jnselfloop{J_1}{J_2}}            %
\end{array}
\end{equation*}
with $J_0, J_1, J_2 \in \J$ three Jackson Types.

A Jackson Type is in \emph{normal form} iff all brackets are moved to the right using the above rules.
\end{definition}

The class of Jackson Nets is obtained by recursively applying \emph{generation rules}, starting from a singleton net with only one place. 
These generation rules are similar to those defined by Murata~\cite{Murata1989} and preserve soundness~\cite{vanHee2009}. 
%Indeed, it is shown that the generation rules for \jns too preserve soundness, and as such \jns are sound by construction~\cite{vanHee2009}.
Thus, any Jackson Net is sound by construction. 

\begin{definition}[Jackson Net~\cite{vanHee2009}] \deflabel{jackson_net}
	A \wfnet $N = (\places, \transitions, \flow, \inp, \outp)$ is called a \emph{Jackson Net} if it can be generated from a single place $p$ by applying the following five generation rules recursively:
%	\begin{enumerate}[label=J\arabic*]%, wide = 0pt, leftmargin=\parindent]
%		\item \label{itm:jn_R1} Sequential place split: %$\ell(p_1) = \left(\ell (p_2) ; \left(\ell (t_1) ; \ell (p_3)\right)\right)$
%		$p \leftrightarrow \jnsequence{p_1}{\jnsequence{t}{p_2}}$
%		\item \label{itm:jn_R2} Sequential transition split: 
%		%$\ell(t_1) = \left(\ell (t_2) ; \left(\ell (p_1) ; \ell (t_3)\right)\right)$
%		$ t \leftrightarrow \jnsequence{t_1}{\jnsequence{p_1}{t_2}}$
%		\item \label{itm:jn_R3} Loop Addition: 
%		%$\ell(p_1) = \left( \ell(p_2)\# \ell(t_1)\right)$
%		$p \leftrightarrow \jnselfloop{p}{t}$
%		\item \label{itm:jn_R4} AND split:  
%		%$\ell(p_1) = \left( \ell(p_2) ||  \ell(p_3)\right)$
%		$p \leftrightarrow \jnparallel{p_1}{p_2}$
%		\item \label{itm:jn_R5} OR split: 
%		%$\ell(t_1) = \left( \ell(t_2) +  \ell(t_3)\right)$
%		$ t \leftrightarrow \jnchoice{t_1}{t_2}$
%	\end{enumerate}
\begin{equation*}
	\begin{array}{llcll}
\mbox{J1:} & p \leftrightarrow \jnsequence{p_1}{\jnsequence{t}{p_2}} & \ \  &
\mbox{J4:} & p \leftrightarrow \jnparallel{p_1}{p_2}\\
\mbox{J2:} & t \leftrightarrow \jnsequence{t_1}{\jnsequence{p_1}{t_2}} & &
\mbox{J5:} & t \leftrightarrow \jnchoice{t_1}{t_2} \\
\mbox{J3:} & p \leftrightarrow \jnselfloop{p}{t} & & & \\
\end{array}
\end{equation*}
We say that $N$ is generated by $p$.
\end{definition}

As shown in~\cite{vanHee2009}, Jackson Nets are completely determined by Jackson Types, and vice versa. 

\begin{theorem}[Jackson Nets and Jackson Types are equivalent~\cite{vanHee2009}]
Let $N_1$ and $N_2$ be two Jackson Nets that are generated by the Jackson Types $J_1$ and $J_2$, resp. Then $N_1$ and $N_2$ are isomorphic iff $J_1 \algequiv J_2$.
\end{theorem}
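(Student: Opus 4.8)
The plan is to treat the generation process as a map $\mathcal{N}$ sending each Jackson Type $J$ to the net $\mathcal{N}(J)$ it generates; this is well defined up to isomorphism, since the only freedom in applying rules J1--J5 is the naming of the freshly introduced nodes. The biconditional then splits into \emph{soundness}, that $J_1 \algequiv J_2$ implies $\mathcal{N}(J_1) \leftrightsquigarrow \mathcal{N}(J_2)$, and \emph{completeness}, the converse. Before either direction I would record the structural fact underlying both: the four constructors $;$, $\parallel$, $+$, and $\#$ are realized by net operations that glue their operand blocks along shared boundary nodes (the source/sink place or the boundary transition), and isomorphism is a \emph{congruence} for this gluing, i.e.\ replacing an operand block by an isomorphic one yields an isomorphic net.

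For soundness I would induct on a derivation of $J_1 \algequiv J_2$. As $\algequiv$ is the least equivalence relation containing the six axioms and closed under type contexts, it suffices to (i) check that each axiom, applied once at the root, induces an isomorphism of generated nets, and (ii) lift this through contexts by the congruence observation. Part (i) is six local checks: the associativity axioms for $;$, $+$, and $\#$ merely rebracket the gluing of three consecutive operand blocks and leave the underlying glued graph unchanged, while the commutativity axioms for $+$ and $\parallel$ swap two branches attached to the same pair of boundary nodes, which is plainly an isomorphism. Reflexivity, symmetry, and transitivity are inherited because $\leftrightsquigarrow$ is itself an equivalence relation.

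The substance is completeness, which I would attack through canonical forms. By \defref{algebraic_equivalence}, moving every bracket to the right rewrites each type into an $\algequiv$-equivalent normal form in which each maximal $;$-, $+$-, $\parallel$-, or $\#$-chain is a flat right-nested list; the only part of $\algequiv$ still acting on normal forms is the reordering of the branches under a $+$- or $\parallel$-node. It therefore suffices to show that $\mathcal{N}(J)$ determines the normal form of $J$ up to such branch reorderings, that is, that a Jackson Net admits an essentially unique block decomposition. I would prove this by an induction that reads the decomposition off the net: the source, the sink, the first and last nontrivial block, and the places and transitions at which the net separates are isomorphism invariants of a block-structured WF-net, so an isomorphism $\mathcal{N}(J_1) \leftrightsquigarrow \mathcal{N}(J_2)$ must send the top-level operator and its operand blocks of $J_1$ onto those of $J_2$ (matching atomic symbols bijectively and possibly permuting the operands of a commutative node); recursing into the paired sub-blocks and collecting the induced permutations as applications of the two commutativity axioms yields $J_1 \algequiv J_2$.

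The main obstacle is exactly this uniqueness of the block decomposition. Associativity is dispatched cheaply by the right-bracketed normal form, but commutativity means the net forgets the left-to-right order of parallel and choice branches, so I must prove that these reorderings are the \emph{only} ambiguity: the decomposition tree of a Jackson Net is unique up to permuting the siblings under a commutative operator. This is the analogue of the unique series--parallel (or modular) decomposition theorem, and the delicate point is excluding ``crossing'' decompositions in which an isomorphism might, a priori, cut the net along boundaries belonging to no single parse. Here I would lean on the WF-net constraints (a unique source, a unique sink, every node on a source-to-sink path) together with soundness of Jackson Nets to show that the candidate separating nodes are forced, so that every isomorphism respects some parse and the decomposition is canonical.
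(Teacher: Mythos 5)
The paper does not prove this theorem: it is imported verbatim from van Hee et al.~\cite{vanHee2009}, so there is no in-paper proof to compare against. Judged on its own, your plan is the natural one, and the soundness direction (each of the six axioms induces a net isomorphism, lifted through contexts by congruence of $\leftrightsquigarrow$ under block substitution, plus the observation that $\leftrightsquigarrow$ is an equivalence) is essentially complete modulo six routine local checks.

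The genuine gap is in completeness, and you have located it yourself: the claim that the decomposition tree of a Jackson Net is unique up to permuting the operands of $\jnchoiceop$ and $\jnparallelop$ is exactly the theorem's hard content, and your proposal argues for it only by analogy with unique series--parallel decomposition and by listing candidate isomorphism invariants (source, sink, separating nodes). Those invariants do not by themselves exclude overlapping parses: the paper's own example after the grammar definition shows two distinct types for the net of Fig.~2 whose top-level splits cut the net at different nodes, and a single place can simultaneously be the fresh middle place of a J1-expansion, a duplicated place from J4, and the anchor of a J3 self-loop, so ``the top-level operator is an isomorphism invariant'' is precisely what must be proved, not assumed. The standard way to discharge this --- and, in effect, what the cited source does --- is to read the generation rules backwards as a reduction system, prove termination (each rule strictly decreases net size) and local confluence modulo the associativity and commutativity axioms by an explicit critical-pair analysis of overlapping redexes, and conclude unique normal forms; your decomposition-uniqueness lemma is equivalent to that confluence statement. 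Without that critical-pair analysis (or an equivalent canonical-decomposition argument carried out in detail), the completeness direction is a plan rather than a proof.
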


\subsection{Petri Nets with Identifiers} \seclabel{petrinets_identifiers}
% Systems as petri nets with identifiers
Whereas \wfnets describe all possible executions for a single case, 
systems typically consist of many interacting processes.
The latter can be modeled using typed Petri nets with identifiers (\tpnids for short)~\cite{vanderWerf2022}.
In this formalism, each object is typed and has a unique identifier to be able to refer to it.
Tokens carry vectors of identifiers, which are used to relate objects.
Variables on the arcs are used to manipulate the identifiers.
% Sentence below is already too verbose?
%\dbtodo{We define sets of identifiers, type labels, and variables in \defref{identifier_types}.}

\begin{definition}[Identifiers, Types and Variables] \deflabel{identifier_types}
	Let \idset, \labelset, and \varset denote countably infinite sets of identifiers, type labels, and variables, respectively. We define:
	\begin{compactitem}
			\item the \emph{domain assignment} function $I : \labelset \rightarrow \powerset{\idset}$, such that $I(\lambda_1)$ is an infinite set, and $I(\lambda_1) \cap I(\lambda_2) \neq \emptyset$ implies $\lambda_1 = \lambda_2$ for all $\lambda_1, \lambda_2 \in \labelset$;
			\item the \emph{id typing} function $\type_{\idset}:\idset\to\labelset$ s.t.\ if $\type_{\idset}(\cname{id})=\lambda$, then $\cname{id}\in I(\lambda)$;
			\item a \emph{variable typing} function $\type_{\varset}:\varset\to\labelset$, prescribing that $x\in\varset$ can be substituted only by values from $I(\type_{\varset}(x))$.
		\end{compactitem}
	When clear from the context, we omit the subscripts of $\type$.
	We lift the $\type$ functions to sets, vectors, and sequences by applying the function on each of its constituents.
\end{definition}

In a \tpnid, each place is annotated with a label, called the \emph{place type}.
A place type is a vector of types, indicating types of identifier tokens the place can carry. Similar to Jackson Types, we use $\jnplace{p}{\lambda}$ to denote that place $p$ has type $\alpha(p) = \lambda$.
%\dbtodo{We use $[p, \lambda]$ to denote a place $p$ with $\alpha(p) = \lambda$.}
%Using $\alpha$, places are annotated with its place type.
%We write $\jnplace{p}{\lambda}$ for place $p$ with place type $\lambda$.
%A place with an empty place type, represented by the empty vector, is a classical Petri net place carrying indistinguishable (black) tokens.
Each arc is inscribed with a multiset of vectors of identifiers, such that the type of each variable coincides with the place types.
%This allows to model situations in which a transition may require multiple tokens with different identifiers from the same place.
%Using $\beta$, arcs are annotated with vectors of variables called \emph{inscriptions}. 
If the inscription is empty or contains a single element, we omit the brackets. 

\begin{definition}[Typed Petri net with identifiers] \deflabel{tpnids}
	A \emph{typed Petri net with identifiers} (\tpnid) $N$ is a tuple $(\places,\transitions,\flow,\alpha,\beta)$, where:
	\begin{compactitem}
			\item $(\places,\transitions,\flow)$ is a classical Petri net;
			\item $\alpha: \places \to \labelset^*$ is the \emph{place typing function};%, if $\alpha(p) = \emptysequence$, place $p$ can only carry black, i.e., uncolored, tokens;
			\item $\beta : \flow \to \setmult{(\varset^*)}$ defines for each arc a multiset of \emph{variable vectors} s.t.\ $\alpha(p) = \type(x)$ for any $x \in \setsupp{\beta((p,t))}$ and $\type(y)=\alpha(p')$ for any $y \in \setsupp{\beta((t,p'))}$ where $t\in\transitions$, $p\in\pre{t}$, $p'\in\post{t}$.
		\end{compactitem}
\end{definition}

%In the remainder of the paper we assume there exists a bijection on variables ($\varset$) and types ($\labelset$), and that there are no `classical' places: $\size{\alpha(p)} > 0$ for all $p \in P$.
%If an event log does not specify a type corresponding to a variable, then we say that these are equal.
%Since an event log is a sequence of events for a certain object type, we need to make the above assumptions to avoid undefined sequences without object type.
%We assign a Gödel-like number $\godelsymb$ to each type to impose an ordering on arc inscriptions: if $\lambda = \sequence{x, y, z}$, then $\godel{x} < \godel{y} < \godel{z}$. 
%The induced total order is required to extend the notion of a normal form for Jackson nets to typed Jackson nets.
%Since the order of a type can be deduced by $\godelsymb$, we can use sequence and set notation interchangeably for arc inscriptions.

% TPNID semantics
A marking of a \tpnid is the configuration of tokens over the set of places.
Each token in a place should be of the correct type, i.e., the vector of identifiers carried by a token in a place should match the corresponding place type.
The set $\colset(\place)$ defines all possible vectors of identifiers a place $p$ may carry.

\begin{definition}[Marking] \deflabel{marking}
	Given a \tpnid $N = (\places, \transitions, \flow, \alpha, \beta)$, and place $\place \in \places$, its \emph{id set} is $\colset(\place) = \prod_{1 \leq i \leq |\alpha(\place)|} I(\alpha(\place)(i))$.
	A \emph{marking} is a function $\marking \in \markings{N}$, with $\markings{N} = \places \to \setmult{(\seq{\labelset})}$, such that $\marking(\place) \in \setmult{\colset(p)}$, for each place $\place \in \places$. The set of identifiers used in $\marking$ is denoted by $\Id(\marking) = \Union_{\place \in \places} \rng{\setsupp{\marking(\place)}}$
	The pair \markednet{N}{\marking} is called a \emph{marked \tpnid}.
\end{definition}

To define the semantics of a \tpnid, the variables need to be valuated with identifiers.

\begin{definition}[Variable sets~\cite{vanderWerf2022}] \deflabel{variable_sets}
	Given a \tpnid $N=(\places,\transitions,\flow,\alpha,\beta)$, $t\in \transitions$ and $\lambda \in \Lambda$, we define the following sets of variables:
	\begin{compactitem}
		\item \emph{input variables} as $\smash{\invar{t} = \bigcup_{x \in \beta((p,t)), p\in\pre{t}} \rng{\setsupp{x}}}$;
		\item \emph{output variables} as $\smash{\outvar{t} = \bigcup_{x \in \beta((t,p)),p \in\post{t}} \rng{\setsupp{x}}}$;
		\item \emph{variables} as $\var{t} = \invar{t} \cup \outvar{t}$;
		\item \emph{emitting variables} as $\newvar{t} = \outvar{t}\setminus\invar{t}$;
		\item \emph{collecting variables} as $\delvar{t} = \invar{t} \setminus \outvar{t}$;
		\item \emph{emitting transitions} as $E_N(\lambda) = \{ t \mid \exists x \in \newvar{t} \wedge \type(x) = \lambda \}$;
		\item \emph{collecting transitions} as $C_N(\lambda) = \{ t \mid \exists x \in \delvar{t} \wedge \type(x) = \lambda \}$;
		\item \emph{types in $N$} as \smash{$\type(N) = \{ \vec\lambda \mid \exists p \in P : \vec\lambda\in\alpha(p)\}$}.
	\end{compactitem}
\end{definition}

%As customary in colored Petri nets, the firing of a transition requires a \emph{binding} that valuates variables to identifiers.
A valuation of variables to identifiers is called a \emph{binding}.
Bindings are used to inject new fresh data into the net via variables that emit identifiers, i.e., via variables that appear only on the output arcs of that transition.
%We require bindings to be an injection, i.e., no two variables within a binding may refer to the same identifier.
Note that in this definition, freshness of identifiers is local to the marking, i.e., disappeared identifiers (those fully removed from the net through collecting transitions) may be reused, as it does not hamper the semantics of the \tpnid.
%Our semantics allow the use of well-ordered sets of identifiers, such as the natural numbers, as used in~\cite{Polyvyanyy2019,RosaVelardo2006} to ensure that identifiers are globally new.
%Here we assume local freshness over global freshness.

\begin{definition}[Firing rule for \tpnids]
 	Given a marked \tpnid $(N,m)$ with $N=(\places,\transitions,\flow,\alpha,\beta)$, a \emph{binding} for  transition $t\in T$ is an injective function  $\psi:\V\rightarrow \I$ such that
 	$\type(v) = \type(\psi(v))$
 	and
 	$\psi(v)\not\in \Id(m)$ iff $v\in\newvar{t}$.
 	Transition $t$ is \emph{enabled} in $(N,m)$ under binding $\psi$, denoted by $\pnenabled{(N,m)}{t,\psi}$ iff $\rho_\psi(\beta(p,t)) \leq m(p)$ for all $p\in\pre{t}$.
 	Its firing results in marking $m'$, denoted by $\fire{(N,m)}{t,\psi}{(N, m')}$, such that $m'(p) + \rho_\psi(\beta(p,t)) = m(p) + \rho_\psi(\beta(t,p))$. %for %$p \in \places$.%, for every $p\in\pre{t}\cup\post{t}$.
% 	The set of all finite firing sequences 
\end{definition}
The firing rule is inductively extended to sequences.
A marking $m'$ is \emph{reachable} from $m$ if there exists $\eta\in (\transitions\times(\V\rightarrow \I))^*$ such that $\fire{(N,m)}{\eta}{(N,m')}$.
 %For a marked \tpnid $(N,M)$, we denote with $\reachable{N}{M}$  the set of all markings reachable from $M$.
We denote with $\pnreachable{N}{m}$ the set of all markings reachable from $m$ for $\markednet{N}{m}$.
%\db{New text from here to def 10.}
We use $\L\markednet{N}{m}$ to denote all possible firing sequences of \markednet{N}{m}, i.e., $\L\markednet{N}{m} = \{\eta \mid \pnenabled{(N,m)}{\eta}{}\}$ and $\Id(\eta) = \bigcup_{(t,\psi) \in \eta} \rng{\psi}$ for the set of identifiers used in $\eta$.
The execution semantics of a \tpnid is defined as an LTS that accounts for all possible executions starting from a given initial marking.
We say two \tpnids are bisimilar if their induced transition systems are.

 \begin{definition}
	     Given a marked \tpnid $(N,m_0 )$ with $N=(P,T,F,\alpha,\beta)$, its induced transition system  is $\transitionsystem{N,m_0} = (\mathbb{M}(N),(T\times(\V\to\I)),m_0 , \to)$ with $m\xrightarrow{(t,\psi)} m'$ iff  $\fire{\markednet{N}{m}}{t,\psi}{\markednet{N}{m'}}$.
%	     \begin{compactitem}
%	     	 \item $S=\pnreachable{N}{m_0 }$ and $s_0=m_0 $,
%	     	 \item for $M,M^\prime\in S$ it holds that: 
%	     	 	$M\xrightarrow{(t,\psi)} M^\prime$ iff  $\fire{\marking}{t,\psi}{\marking^\prime}$. %for $t\in\transitions$ and binding $\sigma$. 
%	     \end{compactitem}
\end{definition}

%\dbtodo{Paragraph on indentifier soundness and why we care; make link with Jackson Nets as final note.}
Soundness properties for \wfnets typically consist of proper completion, weak termination, and quasi-liveness~\cite{AalstHHS11_soundness}. 
Extending soundness to \tpnids gives \emph{identifier soundness}~\cite{vanderWerf2022}.
%By extending to \tpnids, proper completion becomes proper type completion, and weak termination becomes weak type termination.
%The first intuitively means that objects of type $\lambda$ are created by emitters, and once consumed by a collector all objects with the same type should be consumed.
%Proper type completion intuitively means that the life-cycle of each type is proper completing: an object of a given type ``enters'' the system through an emitting transition, binding it to a unique identifier.
%Once a collecting transition fires, there should be no remaining tokens referring to the removed object.
%Weak type termination intuitively means that it should always be possible to eventually remove a type by firing a collecting transition.
In \tpnids, each object of a given type ``enters'' the system through an emitting transition, binding it to a unique identifier.
Identifier soundness intuitively states that it should always be possible to remove objects (weak type termination), and that once a collecting transition fires for an object, there should be no remaining tokens referring to the removed object (proper type completion). 

%\begin{definition}[Identifier Soundness~\cite{vanderWerf2022}]
%	Let $\markednet{N}{m_0}$ a marked \tpnid and $\lambda \in \Lambda$ some type.
%	$\markednet{N}{m_0}$ is \emph{$\lambda$-sound} iff it is
%	\begin{inparaenum}[\it (i)]
%		\item proper $\lambda$-completing, i.e., for all $t \in C_N(\lambda)$, bindings $\psi : \V\to\I$ and markings $m, m'\in\pnreachable{N}{m_0}$, if $\fire{m}{t,\psi}{m'}$, then for all identifiers $\cname{id} \in \rng{\restr{\psi}{\delvar{t}}} \cap \Id(m)$ and $\type(\cname{id})=\lambda$, it holds that $\cname{id}\not\in\Id(m')$\footnote{Here, we constrain $\psi$ only to objects of type $\lambda$ that are only consumed.};
%		\item weakly $\lambda$-terminating, i.e.,  for every $m \in \pnreachable{N}{m_0}$ and identifier $\cname{id} \in I(\lambda)$ such that $\cname{id} \in \Id(m)$, there exists a marking $m' \in \pnreachable{N}{m}$ with $\cname{id} \not\in \Id(m')$.
%	\end{inparaenum}
%	If it is $\lambda$-sound for all $\lambda\in \type(N)$, then it is \emph{identifier sound}.
%\end{definition}

\begin{definition}[Identifier Soundness~\cite{vanderWerf2022}]
	Let $\markednet{N}{m_0}$ a marked \tpnid and $\lambda \in \Lambda$ some type.
	$\markednet{N}{m_0}$ is \emph{$\lambda$-sound} iff it is
	\begin{compactitem}
		\item Proper $\lambda$-completing, i.e., for all $t \in C_N(\lambda)$, bindings $\psi : \V\to\I$ and markings $m, m'\in\pnreachable{N}{m_0}$, if $\fire{m}{t,\psi}{m'}$, then for all identifiers $\cname{id} \in \rng{\restr{\psi}{\delvar{t}}} \cap \Id(m)$ and $\type(\cname{id})=\lambda$, it holds that $\cname{id}\not\in\Id(m')$ \footnote{Here, we constrain $\psi$ only to objects of type $\lambda$ that are only consumed.};
		\item Weakly $\lambda$-terminating, i.e.,  for every $m \in \pnreachable{N}{m_0}$ and identifier $\cname{id} \in I(\lambda)$ such that $\cname{id} \in \Id(m)$, there exists a marking $m' \in \pnreachable{N}{m}$ with $\cname{id} \not\in \Id(m')$.
	\end{compactitem}
	If it is $\lambda$-sound for all $\lambda\in \type(N)$, then it is \emph{identifier sound}.
\end{definition}

\subsection{Typed Jackson Nets} \seclabel{typed_jackson_nets}
%\dbtodo{Needs different text! We don't look at process trees anymore.}
%Now that we have established that \jns are equivalent to process trees, we argue that \emph{typed} Jackson nets (\tjns) are a natural extension to \jns by adding identifiers $\cname{id} \in \I$, and thus are a natural extension to process trees. We show the extension of \jns explicitly in \defref{typed_jackson_net} by relating the generation rules for \jns to the generation rules for \tjns. 
In general, identifier soundness is undecidable for \tpnids~\cite{vanderWerf2022}.
Similar as Jackson Nets restrict \wfnets to blocks, \emph{typed Jackson Nets} (\tjns) restrict \tpnids to blocks, while guaranteeing identifier soundness and liveness. 
%A \emph{typed} Jackson net (\tjn) is a \tpnid that is sound and live by construction~\cite{vanderWerf2022}.
For \tjns, we disallow multiplicity on arcs and variables, i.e., $\beta(f)(v) \le 1$ for all $f \in F$ and $v \in \V$, and imply a bijection on variables and identifier types.
This prevents place types like $\lambda = \sequence{x, x}$.
Assuming a G\"odel-like number on types (cf.~\cite{vanHee2009}), place types and arc inscriptions can be represented as sets.
Similar as Jackson Types describe Jackson Nets, we apply a notation based on Jackson Types to denote typed Jackson Nets.

%We argue that \tjns are a natural extension to \jns by adding identifiers $\cname{id} \in \I$.
%We show the extension of \jns explicitly in \defref{typed_jackson_net} by relating the generation rules for \jns to the generation rules for \tjns.

%\andy{we need to be specific on how arc inscriptions relate to each other (no need in elaborating on types due to the bijection assumption)}
\begin{definition}[Typed Jackson Net] \deflabel{typed_jackson_net}
	A \tpnid $N$ is a \emph{typed Jackson Net} if it can be generated from a set of transitions $T'$ by applying any of the following six generation rules recursively. If $N$ is generated from a singleton set of transitions (i.e., $\length{T'} = 1$), $N$ is called \emph{atomic}. 
	
	\begin{enumerate}[label=R\arabic*, wide = 0pt, leftmargin=\parindent]
		\item \label{itm:tjn_R1} Place Expansion:
		$
		\jnplace{p}{\lambda} \leftrightarrow
		\jnsequence{
			\jnplace{p_1}{\lambda}
		}{
			\jnsequence{
				t_1
			}{
				\jnplace{p_2}{\lambda}
			}
		}$
		\begin{center}
			\begin{tikzpicture}[->,>=stealth',auto,x=17mm,y=1.3cm,node distance=15mm and 15mm,thick,  every node/.style={scale=0.7}]
				
				\node[pl,label = below:$p$] (p){};
				\node[left of = p,yshift=8mm](in1){};
				\node[left of = p,yshift=-8mm](in2){};
				\node[right of = p,yshift=8mm](out1){};
				\node[right of = p,yshift=-8mm](out2){};
				
				\path[->] 
				(in1) edge node[above]{$\nu$} (p)
				(in2) edge node[above]{$\nu$} (p)
				(p) edge node[above]{$\nu$} (out1)
				(p) edge node[above]{$\nu$} (out2)
				;  
				
				\node[pl,right of = p, xshift=2.7cm, label = below:$p_1$] (p1){};
				\node[tr,right of = p1, label = center:$t$] (t){};
				\node[pl,right of = t, label = below:$p_2$] (p2){};
				\node[left of = p1,yshift=8mm](in1){};
				\node[left of = p1,yshift=-8mm](in2){};
				\node[right of = p2,yshift=8mm](out1){};
				\node[right of = p2,yshift=-8mm](out2){};
				\path[->] 
				(in1) edge node[above]{$\nu$} (p1)
				(in2) edge node[above]{$\nu$} (p1)
				(p1) edge node[above]{$\mu$} (t)
				(t) edge node[above]{$\mu$} (p2)
				(p2) edge node[above]{$\nu$} (out1)
				(p2) edge node[above]{$\nu$} (out2)
				;  
				
				\draw[{Triangle[open]}-{Triangle[open]}] ([xshift=7.5mm]p.east) -- ([xshift=-7.5mm]p1.west) ;
				
			\end{tikzpicture}
		\end{center}
	
		\item \label{itm:tjn_R2} Transition Expansion:
		$
		t \leftrightarrow
		\jnsequence{
			t_1
		}{
			\jnsequence{
				\jnplace{p}{\lambda}
			}{
				t_2
			}
		}
		$%
		, with $\var{t} \subseteq \lambda$
		\begin{center}
			\begin{tikzpicture}[->,>=stealth',auto,x=17mm,y=1.3cm,node distance=15mm and 15mm, thick,  every node/.style={scale=0.7}]
				\node[tr,label = center:$t$] (t){};
				\node[left of = t,yshift=8mm](in1){};
				\node[left of = t,yshift=-8mm](in2){};
				\node[right of = t,yshift=8mm](out1){};
				\node[right of = t,yshift=-8mm](out2){};
				
				\path[->] 
				(in1) edge node[above]{$\nu_1$} (t)
				(in2) edge node[above]{$\nu_2$} (t)
				(t) edge node[above]{$\nu_3$} (out1)
				(t) edge node[above]{$\nu_4$} (out2)
				;  
				
				\node[tr,right of = t, xshift=2.7cm, label = center:$t_1$] (t1){};
				\node[pl,right of = t1, label = below:$p$] (p){};
				\node[tr,right of = p, label = center:$t_2$] (t2){};
				\node[left of = t1,yshift=8mm](in1){};
				\node[left of = t1,yshift=-8mm](in2){};
				\node[right of = t2,yshift=8mm](out1){};
				\node[right of = t2,yshift=-8mm](out2){};
				\path[->] 
				(in1) edge node[above]{$\nu_1$} (t1)
				(in2) edge node[above]{$\nu_2$} (t1)
				(t1) edge node[above]{$\mu$} (p)
				(p) edge node[above]{$\mu$} (t2)
				(t2) edge node[above]{$\nu_3$} (out1)
				(t2) edge node[above]{$\nu_4$} (out2)
				;  
				
				\draw[{Triangle[open]}-{Triangle[open]}] ([xshift=7.5mm]t.east) -- ([xshift=-7.5mm]t1.west) ;		
			\end{tikzpicture}
		\end{center}
	
		\item \label{itm:tjn_R3} Place Duplication:
		$
		\jnsequence{t_1}{\jnsequence{\jnplace{p}{\lambda}}{t_2}} \leftrightarrow
		\jnsequence{t_1}{\jnsequence{\jnparallel{\jnplace{p}{\lambda}}{\jnplace{p'}{\lambda'}}}{t_2}}
		$%
		, \\with $\lambda' \cap \newvar{\post{p}} = \emptyset$\\ 
		\begin{center}
			\begin{tikzpicture}[->,>=stealth',auto,x=17mm,y=1.3cm,node distance=15mm and 15mm, thick,  every node/.style={scale=0.7}]
				\node[tr, label = center:$t_1$] (t1){};
				\node[pl,right of = t1, label = below:$p$] (p){};
				\node[tr,right of = p, label = center:$t_2$] (t2){};
				\node[left of = t1,yshift=8mm](in1){};
				\node[left of = t1,yshift=-8mm](in2){};
				\node[right of = t2,yshift=8mm](out1){};
				\node[right of = t2,yshift=-8mm](out2){};
				\path[->] 
				(in1) edge node[above]{$\nu_1$} (t1)
				(in2) edge node[above]{$\nu_2$} (t1)
				(t1) edge node[above]{$\mu_1$} (p)
				(p) edge node[above]{$\mu_2$} (t2)
				(t2) edge node[above]{$\nu_3$} (out1)
				(t2) edge node[above]{$\nu_4$} (out2)
				;

				\node[tr,right of = t2, xshift=2.7cm, label = center:$t_1$] (tt1){};
				\node[pl,right of = tt1, label = below:$p$,yshift=8mm] (p1){};
				\node[pl,right of = tt1, label = below:$p'$,yshift=-8mm] (p2){};
				\node[tr,right of = p1, label = center:$t_2$,yshift=-8mm] (tt2){};
				\node[left of = tt1,yshift=8mm](in1){};
				\node[left of = tt1,yshift=-8mm](in2){};
				\node[right of = tt2,yshift=8mm](out1){};
				\node[right of = tt2,yshift=-8mm](out2){};
				\path[->] 
				(in1) edge node[above]{$\nu_1$} (tt1)
				(in2) edge node[above]{$\nu_2$} (tt1)
				(tt1) edge node[above]{$\mu_1$} (p1)
				(p1) edge node[above]{$\mu_2$} (tt2)
				(tt1) edge node[above]{$\mu_3$} (p2)
				(p2) edge node[above]{$\mu_4$} (tt2)
				(tt2) edge node[above]{$\nu_3$} (out1)
				(tt2) edge node[above]{$\nu_4$} (out2)
				;  
				
				\draw[{Triangle[open]}-{Triangle[open]}] ([xshift=7.5mm]t2.east) -- ([xshift=-5.5mm]tt1.west) ;		
			\end{tikzpicture}
		\end{center}
	
		\item \label{itm:tjn_R4} Transition Duplication:
		$
		t \leftrightarrow 
		\jnchoice{t}{t'}
		$
		\begin{center}
			\begin{tikzpicture}[->,>=stealth',auto,x=17mm,y=1.3cm,node distance=15mm and 15mm, thick,  every node/.style={scale=0.7}]
				\node[pl] (p1){};
				\node[tr,right of = p1, label = center:$t$] (t){};
				\node[pl,right of = t, ] (p2){};
				\node[left of = p1,yshift=8mm](in1){};
				\node[left of = p1,yshift=-8mm](in2){};
				\node[right of = p2,yshift=8mm](out1){};
				\node[right of = p2,yshift=-8mm](out2){};
				\path[->] 
				(in1) edge node[above]{$\nu_1$} (p1)
				(in2) edge node[above]{$\nu_2$} (p1)
				(p1) edge node[above]{$\mu_1$} (t)
				(t) edge node[above]{$\mu_2$} (p2)
				(p2) edge node[above]{$\nu_3$} (out1)
				(p2) edge node[above]{$\nu_4$} (out2)
				;  
				
				\node[pl,right of = p2, xshift=2.7cm,] (pp1){};
				\node[tr,right of = pp1, label = center:$t$,yshift=8mm] (t1){};
				\node[tr,right of = pp1, label = center:$t'$,yshift=-8mm] (t2){};
				\node[pl,right of = t1,yshift=-8mm ] (pp2){};
				\node[left of = pp1,yshift=8mm](in1){};
				\node[left of = pp1,yshift=-8mm](in2){};
				\node[right of = pp2,yshift=8mm](out1){};
				\node[right of = pp2,yshift=-8mm](out2){};
				\path[->] 
				(in1) edge node[above]{$\nu_1$} (pp1)
				(in2) edge node[above]{$\nu_2$} (pp1)
				(pp1) edge node[above]{$\mu_1$} (t1)
				(t1) edge node[above]{$\mu_2$} (pp2)
				(pp1) edge node[above]{$\mu_1$} (t2)
				(t2) edge node[above]{$\mu_2$} (pp2)
				(pp2) edge node[above]{$\nu_3$} (out1)
				(pp2) edge node[above]{$\nu_4$} (out2)
				;  
				\draw[{Triangle[open]}-{Triangle[open]}] ([xshift=7.5mm]p2.east) -- ([xshift=-7.5mm]pp1.west) ;
			\end{tikzpicture}
		\end{center}
	
		\item \label{itm:tjn_R5} Self Loop Addition:
		$
		\jnplace{p}{\lambda} \leftrightarrow
		\jnselfloop{
			\jnplace{p}{\lambda}
		}{
			t
		}
		$
		\begin{center}
			\begin{tikzpicture}[->,>=stealth',auto,x=17mm,y=1.3cm,node distance=15mm and 15mm,thick,  every node/.style={scale=0.7}]
				
				\node[pl,label = below:$p$] (p){};
				\node[left of = p,yshift=8mm](in1){};
				\node[left of = p,yshift=-8mm](in2){};
				\node[right of = p,yshift=8mm](out1){};
				\node[right of = p,yshift=-8mm](out2){};
				
				\path[->] 
				(in1) edge node[below]{$\nu$} (p)
				(in2) edge node[above]{$\nu$} (p)
				(p) edge node[below]{$\nu$} (out1)
				(p) edge node[above]{$\nu$} (out2)
				;  
				
				\node[pl,right of = p, xshift=2.7cm, label = below:$p$] (p1){};
				\node[tr,above of = p1, label = center:$t$,yshift=2mm] (t){};
				\node[left of = p1,yshift=8mm](in1){};
				\node[left of = p1,yshift=-8mm](in2){};
				\node[right of = p1,yshift=8mm](out1){};
				\node[right of = p1,yshift=-8mm](out2){};
				\path[->] 
				(in1) edge node[below]{$\nu$} (p1)
				(in2) edge node[above]{$\nu$} (p1)
				(p1) edge[bend left=20] node[left]{$\mu$} (t)
				(t) edge[bend left=20] node[right]{$\mu$} (p1)
				(p1) edge node[below]{$\nu$} (out1)
				(p1) edge node[above]{$\nu$} (out2)
				;  
				
				\draw[{Triangle[open]}-{Triangle[open]}] ([xshift=7.5mm]p.east) -- ([xshift=-7.5mm]p1.west) ;
				
			\end{tikzpicture}
		\end{center}
	
		\item \label{itm:tjn_R6} Identifier Introduction:
		$t \leftrightarrow
		\jnnewidentifier{t}{N_1}{\jnplace{p}{\lambda}}{N_2}
		$%
		, with
		$
		\jnsequence{N_1}{\jnsequence{\jnplace{p}{\lambda}}{N_2}}
		$ a \tjn and $\lambda \cap \var{t} = \emptyset$
		\begin{center}
			\begin{tikzpicture}[->,>=stealth',auto,x=17mm,y=1.3cm,node distance=15mm and 15mm,thick,  every node/.style={scale=0.7}]
				\node[tr,label = center:$t$] (t){};
				\node[left of = t,yshift=8mm](in1){};
				\node[left of = t,yshift=-8mm](in2){};
				\node[right of = t,yshift=8mm](out1){};
				\node[right of = t,yshift=-8mm](out2){};
				
				\path[->] 
				(in1) edge node[below]{$\nu_1$} (t)
				(in2) edge node[above]{$\nu_2$} (t)
				(t) edge node[below]{$\nu_3$} (out1)
				(t) edge node[above]{$\nu_4$} (out2)
				;  
				
				\node[tr,right of = p, xshift=2.7cm, label = center:$t$] (t1){};
				\node[pl,above of = t1, label = above:$p$,yshift=2mm] (p1){};
				\node[tr,left of = p1,label = center:$t_1$] (t2){};
				\node[tr,right of = p1,label = center:$t_2$] (t3){};
				\node[left of = t1,yshift=8mm](in1){};
				\node[left of = t1,yshift=-8mm](in2){};
				\node[right of = t1,yshift=8mm](out1){};
				\node[right of = t1,yshift=-8mm](out2){};
				\path[->] 
				(in1) edge node[below]{$\nu_1$} (t1)
				(in2) edge node[above]{$\nu_2$} (t1)
				(t1) edge[bend left=20] node[left]{$\mu$} (p1)
				(p1) edge[bend left=20] node[right]{$\mu$} (t1)
				(t2) edge node[above]{$\mu$} (p1)
				(p1) edge node[above]{$\mu$} (t3)
				(t1) edge node[below]{$\nu_3$} (out1)
				(t1) edge node[above]{$\nu_4$} (out2)
				;  
				
				\draw[{Triangle[open]}-{Triangle[open]}] ([xshift=7.5mm]t.east) -- ([xshift=-7.5mm]t1.west) ;
			\end{tikzpicture}
		\end{center}
	\end{enumerate}
\end{definition}

An example t-JN is given in \figref{runningexample}. 
Starting with the product process, transitions $C$ and $D$ can be reduced using rule $R2$. 
The resulting transition is a self-loop transition, and can be reduced using $R5$, resulting in the block $\jnnewidentifier{E}{A}{\mathit{product}}{B}$.
This block can be reduced using $R6$, leaving transition $E$.
Transition $E$ is again a self-loop, and can be reduced using $R5$.
The block containing transitions $H$, $J$, $L$ $O$, $N$ and $K$ can be reduced to a single place by applying rules $R1$, $R2$ and $R5$ repeatedly. 
The remaining place is a duplicate place with respect to place $p$, and can be reduced using $R3$.
Applying $R2$ on $G$ and $Z$ results in the block $\jnnewidentifier{G}{T}{\mathit{customer}}{V}$, 
which can be reduced to the transition $G$. 
Hence, the net in \figref{runningexample} is an atomic \tjn.
%\db{Paragraph on that rules R1-R5 are trivially the same as J1-J5 for jackson nets; then explain why we need to use the triple for R6 (example in PDF by JM)}

\begin{theorem}[Identifier Soundness of typed Jackson Nets \cite{vanderWerf2022}] \thmlabel{tjn-is-identifier-sound}
	Given a \tjn $N$, then $N$ is \emph{identifier sound} and \emph{live}.
\end{theorem}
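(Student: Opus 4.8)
The plan is to prove the statement by \emph{structural induction on the derivation} of $N$, i.e., on the number of generation rules R1--R6 used to obtain $N$ from its initial set of transitions $T'$. The invariant I would maintain along every step of the derivation is precisely the conjunction of the two target properties: the current net is identifier sound (proper $\lambda$-completing and weakly $\lambda$-terminating for every $\lambda \in \type(N)$) and live. Since each rule rewrites a single place, a single transition, or one local block into a larger block while leaving the remainder of the net untouched, it suffices to establish a base case and then show that each of the six rules preserves the invariant. The induction must treat soundness and liveness \emph{jointly}, since quasi-liveness of a newly added transition is typically argued from weak termination of the block surrounding it, and conversely termination of a block relies on its transitions being enabled. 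For the base case I would take a set of isolated transitions (the atomic net when $\length{T'}=1$): such transitions have empty presets and postsets, carry no identifiers, so $\type(N)=\emptyset$ and identifier soundness holds vacuously, while liveness is immediate because an empty-preset transition is always enabled.

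For the inductive step I would split the rules into two groups. Rules R1, R2, R4, and R5 are the typed analogues of the classical Jackson-net refinements (place expansion, transition expansion, transition duplication, self-loop addition), and crucially \emph{none of them introduces a new identifier type}: each merely refines the token flow of types already present. For these I would exhibit, for each rule, an \emph{identifier-respecting weak bisimulation} between $\transitionsystem{N,m}$ before the refinement and after hiding the transition(s) the rule introduces, where related markings carry exactly the same set of identifiers (same $\Id(m)$) token-for-token. Because the classical rules are known to preserve ordinary soundness, and because proper $\lambda$-completion and weak $\lambda$-termination are reachability properties that transfer across a bisimulation that preserves identifier content, the two conditions and liveness lift directly. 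The self-loop rule R5 needs only the extra observation that the added transition reads and writes the same token, hence creates and destroys no identifiers and cannot violate completion or termination; its quasi-liveness follows from weak termination of the place it loops on.

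The genuinely identifier-manipulating rules R3 (place duplication) and R6 (identifier introduction) are where the real work lies, and R6 is the main obstacle. For R3, the side condition $\lambda' \cap \newvar{\post{p}} = \emptyset$ guarantees that the duplicated place $\jnplace{p'}{\lambda'}$ introduces no fresh emissions: its tokens are produced and consumed in lock-step with those of $\jnplace{p}{\lambda}$ by the bracketing transitions $t_1,t_2$. Thus every identifier in $p'$ mirrors one in $p$, so no identifier can survive a collecting firing that the original did not already retain, and weak termination is inherited from the pre-refinement net. I would make this precise again through an identifier-respecting bisimulation relating each marking of the refined net to the marking obtained by ``merging'' the tokens of $p$ and $p'$.

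Rule R6 is the crux because it is the only rule that introduces a \emph{new} type $\lambda$, together with the block $\jnsequence{N_1}{\jnsequence{\jnplace{p}{\lambda}}{N_2}}$, which by the rule's side condition is itself required to be a \tjn. Here I would argue on two fronts. First, the existing types remain sound: the condition $\lambda \cap \var{t} = \emptyset$ makes the added self-loop on $t$ orthogonal to the original variables of $t$, so every firing sequence relevant to the old types is unaffected by the presence of the new $\lambda$-lifecycle. Second, the fresh type $\lambda$ is itself $\lambda$-sound, which I would obtain by applying the induction hypothesis to the embedded \tjn $\jnsequence{N_1}{\jnsequence{\jnplace{p}{\lambda}}{N_2}}$ (a strictly smaller derivation): every $\lambda$-token is emitted within $N_1$, parked in $p$, and eventually collected within $N_2$, yielding weak $\lambda$-termination, while proper $\lambda$-completion holds because the self-loop structure forces all tokens referring to a given $\lambda$-identifier onto the single emit-park-collect cycle. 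The delicate part, and the step I expect to be hardest, is the \emph{non-interference} argument: one must show that interleaving the freshly introduced $\lambda$-lifecycle with the already-sound behavior of the ambient net preserves identifier soundness and liveness of \emph{both} the old types and the new one simultaneously, rather than each in isolation. I would handle this by showing that the two dynamics commute (the new self-loop neither blocks nor is blocked by firings of $t$'s original flow), so that a weak-termination witness for a combined marking can be assembled by first driving the $\lambda$-subnet to completion using the embedded net's soundness and then invoking the induction hypothesis on the ambient net.
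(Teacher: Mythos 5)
The paper does not actually prove this theorem: it is stated with a citation and the argument lives entirely in the referenced work \cite{vanderWerf2022}, so there is no in-paper proof to compare your attempt against. Your strategy --- structural induction on the number of applications of rules \ref{itm:tjn_R1}--\ref{itm:tjn_R6}, carrying identifier soundness and liveness jointly as the invariant, with a vacuous base case on a bare set of transitions --- is exactly the style of argument these generation rules are designed to support (each rule is a local refinement that is soundness-preserving by construction), and your identification of \ref{itm:tjn_R6} as the only rule requiring real work, handled by applying the induction hypothesis to the embedded \tjn $\jnsequence{N_1}{\jnsequence{\jnplace{p}{\lambda}}{N_2}}$ plus a non-interference argument, is the right shape of proof.

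Two small imprecisions are worth patching. First, your claim that only \ref{itm:tjn_R3} and \ref{itm:tjn_R6} touch identifier types is not quite right for \ref{itm:tjn_R2}: its side condition is $\var{t} \subseteq \lambda$, not equality, so the inserted place may carry types not previously present in the net, with $t_1$ emitting and $t_2$ immediately collecting the corresponding fresh identifiers. This does not threaten the result (the new lifecycle is a trivial emit--park--collect chain whose soundness is immediate), but it means \ref{itm:tjn_R2} cannot simply be filed under ``refines existing token flow.'' Second, for \ref{itm:tjn_R6} the assertion that the new self-loop ``neither blocks nor is blocked by'' $t$'s original flow is too strong as stated: after the refinement, $t$ is disabled until some transition of $N_1$ deposits a $\lambda$-token in $p$. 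What saves liveness is that the border transitions of the embedded \tjn have empty presets and can therefore always emit a fresh $\lambda$-identifier on demand; that observation should be made explicit, since it is precisely what turns your commutation argument into a proof that liveness of $t$ is preserved rather than merely plausible.
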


	\section{Decomposability of t-JNs} \seclabel{decomposability}
%EXPLAIN DECOMPOSABLE: WHY DO WANT IT?
%
%MAIN REASON: THE PROJECTION DESCRIBES THE ``LIFE CYCLE'' OF INDIVIDUAL OBJECTS.
%
%SHOW THAT ONLY LOOKING AT INDIVIUDAL OBJECTS IS NOT SUFFICIENT, THEREFORE, WE NEED TO LOOK AT THE PROJECTION OF SETS OF IDENTIFIERS! (x, y separate not sufficient, also look at \sequence{x, y})
%
%PROJECTION ON SETS OF TYPES \dbtodo{FIX DEFINITION.}

%\andy{add some preliminaries}

\tpnids specify a class of nets with explicitly defined interactions between objects of different types within one system.
However, sometimes one may want to focus only on some behaviors exhibited by a given set of object types, by extracting a corresponding net from the original \tpnid model. 
We formalize this idea below.
\begin{definition}[Type projection] \deflabel{type_projection}
Let $N = (P_N, T_N, F_N, \alpha, \beta)$ be a \tpnid and $\Upsilon \subseteq \Lambda$ be a set of identifier types.
The \emph{type projection} of $\Upsilon$ on $N$ is a \tpnid $\project{\Upsilon}{N} = (P_\Upsilon, T_\Upsilon, F_\Upsilon, \alpha_\Upsilon, \beta_\Upsilon)$, where:
\begin{compactitem}
\item $P_\Upsilon = \set{p\in P\mid \Upsilon\subseteq\alpha(p)}$;
\item $T_\Upsilon = \set{t\in T\mid (\pre{t}\cup\post{t})\cap P\neq \emptyset}$;
\item $F_\Upsilon = F\cap ((P_\Upsilon\times T_\Upsilon)\cup (T_\Upsilon\times P_\Upsilon))$;
\item $\alpha_\Upsilon(p)=\Upsilon$, for each $p\in P_\Upsilon$;
\item $\beta_\Upsilon(f)=\restr{\beta(f)}{\type_{\varset}^{-1}(\Upsilon)}$, for each $f\in ((P_\Upsilon\times T_\Upsilon)\cup (T_\Upsilon\times P_\Upsilon))$.
\end{compactitem}
%	Given a t-JN $N=(P,T,F,\alpha,\beta)\in\mathcal{T}$ and a set of variables $X$ s.t. $X\subseteq vars(N)$ and $\exists f\in F$ s.t. $\beta(f)=X$ , then the projection of $N$ on $X$ is a t-JN $N'$ s.t.
%	\andy{add the definition; remember that the projection is a ``brand new'' net with renamed places and transitions (but transition labels should remain the same, i.e. $\ell(t)=\ell(t')$, where $\ell$ is the labeling function and $t'$ is just the primed copy of $t$)}
\end{definition}
%\andy{notice that it would make sense to treat types as sequences from the very beginning (we essentially use sequences in the above definition)}

With the next lemma we explore a property of typed Jackson nets that, in a nutshell, shows that \tjns are closed under the type projection. 
This also indirectly witnesses that \tjns provide a suitable formalism for specifying and manipulating systems with multiple communicating components.

\begin{lemma}
If $N = (P_N, T_N, F_N, \alpha, \beta)$ is a \tjn, then  $\project{\Upsilon}{N}$ is a \tjn as well, for any $\Upsilon \subseteq \type_\Lambda(N)$. 
\end{lemma}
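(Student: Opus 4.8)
My plan is to prove this by induction on the generation of $N$, i.e.\ on the number of applications of rules $R1$--$R6$ used to build $N$ from its seed set of transitions $T'$ (equivalently, structural induction on \defref{typed_jackson_net}), strengthened to strong induction so that the \tjn appearing in the premise of $R6$ can be handled by the hypothesis. The basic bookkeeping device is the notion of \emph{$\Upsilon$-visibility}: a place $p$ of the current net is visible precisely when $\Upsilon\subseteq\alpha(p)$, so that $p\in P_\Upsilon$, and a transition is visible precisely when it is adjacent to a visible place, so that it is kept in $T_\Upsilon$. A key structural remark is that $\project{\Upsilon}{N}$ is \emph{monochromatic}: every retained place is relabelled to the single type $\Upsilon$ and every inscription is cut down to its $\Upsilon$-typed variables, so the projected net only ever uses the type $\Upsilon$.

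For the base case the seed net has transitions only, hence $P_\Upsilon=\emptyset$ and $T_\Upsilon=\emptyset$, and the projection is the empty net, a degenerate \tjn. For the inductive step I would take one generation step $N\to N'$ and argue that $\project{\Upsilon}{N'}$ is again a \tjn, using as hypothesis that $\project{\Upsilon}{N}$ is. The point I want to stress is that the projection of a single generation step is in general \emph{not} a single generation step: depending on which nodes are visible it can be the identity, the same rule applied inside the projection, the birth of a fresh atomic component, or even the joining of two components. I would therefore maintain only the invariant ``$\project{\Upsilon}{\cdot}$ is a \tjn'' and re-establish it at each step from closure of \tjns under disjoint union (a union of nets seeded by disjoint $T_1'$ and $T_2'$ is seeded by $T_1'\cup T_2'$) together with the structural composition already licensed by the rules.

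The case analysis is organised by visibility. The purely local rules $R1$ (place expansion), $R4$ (transition duplication) and $R5$ (self-loop addition) are immediate: the introduced nodes are visible exactly when the refined node is, so the step projects to the identical rule when visible and to the identity otherwise. Rule $R2$ expands $t$ into $\jnsequence{t_1}{\jnsequence{\jnplace{p}{\lambda}}{t_2}}$ with $\var{t}\subseteq\lambda$; here I use that $\Upsilon\not\subseteq\lambda$ forces $\Upsilon\not\subseteq\var{t}$, so no neighbour of $t$ can carry all of $\Upsilon$ and the whole step is invisible, while if $\Upsilon\subseteq\lambda$ the new place is visible and the step is either an $R2$ refinement of a visible $t$ or, when $t$ was invisible, the birth of an isolated block $\jnsequence{t_1}{\jnsequence{\jnplace{p}{\Upsilon}}{t_2}}$. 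Rule $R3$ duplicates $\jnplace{p}{\lambda}$ into $\jnparallel{\jnplace{p}{\lambda}}{\jnplace{p'}{\lambda'}}$; the only delicate subcase is the flexible typing $\lambda'\ne\lambda$, where $p'$ may be visible while $p$ is not, so that a visible place is inserted between $t_1$ and $t_2$ and two projected components become joined. Throughout I must also check that each side condition survives restriction to $\Upsilon$, e.g.\ the retained variables of $t$ in $R2$ are $\Upsilon$-typed by construction, and $\Upsilon\subseteq\lambda'$ together with $\lambda'\cap\newvar{\post{p}}=\emptyset$ yields $\Upsilon\cap\newvar{\post{p}}=\emptyset$ for $R3$.

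The main obstacle is rule $R6$, since its premise already contains a \tjn $\jnsequence{N_1}{\jnsequence{\jnplace{p}{\lambda}}{N_2}}$ and it injects a fresh type $\lambda$ with $\lambda\cap\var{t}=\emptyset$. I would dispatch it with the strong induction hypothesis applied to this strictly smaller embedded \tjn. When $\Upsilon\subseteq\lambda$, disjointness gives $\Upsilon\cap\var{t}=\emptyset$, so $t$ was $\Upsilon$-invisible before the step and the projection simply acquires the projection of the grafted lifecycle together with the self-loop on $t$, a \tjn by the strong hypothesis; when $\Upsilon\not\subseteq\lambda$ the shared place $p$ is invisible and the only $\Upsilon$-relevant changes occur inside $N_1$ and $N_2$, again covered by the hypothesis. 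The genuinely hard part, and where I would spend most care, is showing that the component births and joins produced by $R2$, $R3$ and $R6$ really yield a \tjn rather than an arbitrary net: this cannot be read off locally and relies on the global lifecycle structure that a valid generation of $N$ guarantees for each type. The cleanest route, I expect, is to prove that every monochromatic projected component is a block-structured workflow net, i.e.\ a classical Jackson Net in the sense of \defref{jackson_net} coloured by $\Upsilon$, and hence a \tjn.
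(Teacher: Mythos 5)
Your proposal follows essentially the same route as the paper's (sketch) proof: both replay the generation sequence of $N$ under rules R1--R6 and mirror each step by a corresponding construction of the projection using the same rules (with R6 dispatched separately), so the argument is an induction on the generation of $N$ in both cases. Your version is a more careful elaboration of that sketch --- in particular the explicit treatment of non-atomic seeds, the visibility case analysis, and the honest flagging of the component-birth/join cases under R2, R3 and R6 that the paper's one-line ``it is easy to check'' glosses over --- but it is not a genuinely different approach.
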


\begin{proof} (sketch) Let us assume for simplicity that $N$ is atomic. Then, using rules from \defref{typed_jackson_net}, $N$ can be reduced to a single transition.  
Starting from this transition, one can construct a \tjn following the net graph construction from \defref{type_projection} using the same rules (but the identifier introduction one), proviso that arc inscriptions are always of type $\Upsilon$.
Then, it is easy to check that the constructed net is indeed the type projection of $\Upsilon$ on $N$.
\end{proof}

%\begin{property} \proplabel{projection_gives_tjn}
%	Let $\lambda \in \Lambda$ a type, and let $N = (P, T, F, \alpha, \beta)$ a \tjn. Then $\project{\lambda}{N}$ is a \tjn.
%\end{property}
%
%\begin{proof} \prflabel{projection_gives_tjn}
%	This proofs \propref{projection_gives_tjn}.
%\end{proof}

We define next how \tpnids can be composed and show that \tjns are not closed under the composition.
\begin{definition}[Composition] \deflabel{composition}
	Let $N = (P_N, T_N, F_N, \alpha_N, \beta_N)$ and \\$M = (P_M, T_M, F_M, \alpha_M, \beta_M)$ be two \tpnids. Their \emph{composition} is defined by: $$
	\compose{N}{M} = \left(P_N \cup P_M, T_N \cup T_M, F_N \cup F_M, \alpha_N \cup \alpha_M, \beta_N \cup \beta_M\right)
	$$
\end{definition}

\begin{figure}[t]
  \centering
	\begin{subfigure}{.45\textwidth}
		\centering
		 	\begin{tikzpicture}[->,>=stealth',auto,x=10mm,y=1cm,node distance=11mm and 3mm,thick,  every node/.style={scale=0.7}]
			\node[tr] (a) {$a$};
			\node[pl,right of = a, label=below:$p$] (p) {};
			\node[tr,right of = p] (b) {$c$};
			\node[pl,right of = b, label=below:$q$] (q) {};
			\node[tr,right of = q] (c) {$b$};
			\node[pl,right of = c, label=below:$r$] (r) {};
			\node[tr,right of = r] (d) {$d$};
			\path[->]
				(a) edge node[above,pos=.3]{$x$} (p)
				(p) edge node[above,pos=.3]{$x$} (b)
				(b) edge node[above,pos=.3]{$x$} (q)
				(q) edge node[above,pos=.3]{$x$} (c)
				(c) edge node[above,pos=.3]{$x$} (r)
				(r) edge node[above,pos=.3]{$x$} (d);  
			\end{tikzpicture}
        \caption{\tjn $N$\label{fig:tjn-n}}
	\end{subfigure}
	\begin{subfigure}{.45\textwidth}
		\centering
		\begin{tikzpicture}[->,>=stealth',auto,x=10mm,y=1cm,node distance=11mm and 3mm,thick,  every node/.style={scale=0.7}]
			\node[tr,] (a) {$a$};
			\node[pl,right of = a, label=below:$p$] (p) {};
			\node[tr,right of = p] (b) {$b$};
			\node[pl,right of = b, label=below:$s$] (s) {};
			\node[tr,right of = s] (c) {$c$};
			\node[pl,right of = c, label=below:$r$] (r) {};
			\node[tr,right of = r] (d) {$d$};
			\path[->]
				(a) edge node[above,pos=.3]{$x$} (p)
				(p) edge node[above,pos=.3]{$x$} (b)
				(b) edge node[above,pos=.3]{$x$} (s)
				(s) edge node[above,pos=.3]{$x$} (c)
				(c) edge node[above,pos=.3]{$x$} (r)
				(r) edge node[above,pos=.3]{$x$} (d);  
			\end{tikzpicture}
        \caption{\tjn $M$\label{fig:tjn-m}}
	\end{subfigure}
	\begin{subfigure}{.6\textwidth}
		\centering
			\begin{tikzpicture}[->,>=stealth',auto,x=10mm,y=1cm,node distance=11mm and 3mm,thick,  every node/.style={scale=0.7}]
			\node[tr] (a) {$a$};
			\node[pl,right of = a, label=below:$p$] (p) {};
			\node[tr,right of = p] (b) {$b$};
			\node[pl,right of = b, label=below:$s$,yshift=-1cm] (s) {};
			\node[pl,right of = b, label=below:$q$,yshift=1cm] (q) {};
			\node[tr,right of = s, yshift=1cm] (c) {$c$};
			\node[pl,right of = c, label=below:$r$] (r) {};
			\node[tr,right of = r] (d) {$d$};
			\path[->]
				(a) edge node[above,pos=.3]{$x$} (p)
				(p) edge node[above,pos=.3]{$x$} (b)
				(b) edge node[above,sloped]{$x$} (q)
				(q) edge node[above,sloped]{$x$} (c)
				(c) edge node[above,sloped]{$x$} (s)
				(s) edge node[above,sloped]{$x$} (b)
				(c) edge node[above,pos=.3]{$x$} (r)
				(r) edge node[above,pos=.3]{$x$} (d);  
			\end{tikzpicture}
        \caption{\tpnid $\compose{N}{M}$}
	\end{subfigure}
\caption{Although both $N$ and $M$ are \tjns, their composition is not}\label{fig:union-example}
\end{figure}
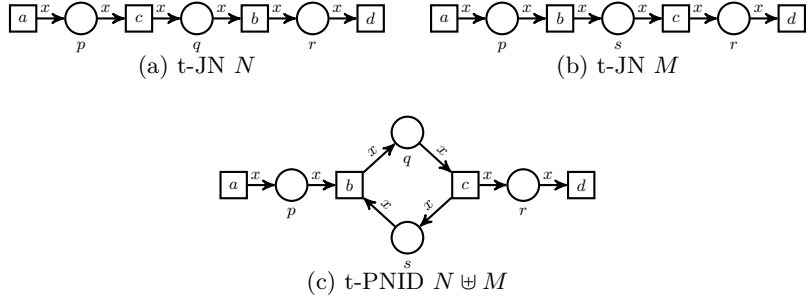

It is easy to see that the composition of two \tjns does not automatically result in a \tjn. Consider nets in \figref{union-example}. It is easy to see that both $N$ and $M$ can be obtained by applying \ref{itm:tjn_R2} from \defref{typed_jackson_net}. However, their composition %$\compose{N}{M}$ 
cannot be reduced to a single transition by consecutively applying rules from \defref{typed_jackson_net}.

\begin{figure}
	\centering
	\begin{tikzpicture}[->,>=stealth',auto,node distance=20mm and 12mm,thick,  every node/.style={scale=0.8}]

\node[tr] (a) {$a$};
\node[pl,right of = a, label=above:$p_x$,yshift=15mm,green1] (px) {};
\node[pl,right of = a, label=below:$p_{xy}$] (p) {};
\node[pl,right of = a, label=below:$p_y$,yshift=-15mm,cadmiumorange] (py) {};
\node[tr, right of = px] (b) {$b$};
\node[tr, right of = py] (c) {$c$};
\node[pl,right of = b, label=above:$q_x$,green1] (qx) {};
\node[pl,right of = c, label=below:$q_y$,cadmiumorange] (qy) {};
\node[pl,right of = b, label=below:$q_{xy}$,yshift=-15mm] (q) {};
\node[tr, right of = q] (d) {$d$};

\path[->,green1]
(a) edge node[above]{$x$} (px)
(px) edge node[above]{$x$} (b)
(b) edge node[above]{$x$} (qx)
(qx) edge node[above]{$x$} (d)
(px) edge node[right, pos=.7]{$x$} (c)
(c) edge node[left, pos=.2]{$x$} (qx)
;  

\path[->,cadmiumorange]
(a) edge  node[below]{$y$} (py)
(py) edge node[below]{$y$} (c)
(py) edge node[right,pos=.7]{$y$} (b)
(b) edge node[left,pos=.2]{$y$} (qy)
(c) edge node[below]{$y$} (qy)
(qy) edge node[below]{$y$} (d)
;

\path[->,byzantium]
(a) edge node[above]{$xy$} (p)
(p) edge node[above,sloped, pos=.6]{$xy$} (b)
(p) edge node[below,sloped, pos=.6]{$xy$} (c)
(b) edge node[above,sloped,pos=.4]{$xy$} (q)
(c) edge node[below,sloped,pos=.4]{$xy$} (q) 
(q) edge node[above]{$xy$} (d)
;  
\end{tikzpicture}
	\caption{Composition of the projections on $\set{\lambda_1}$, $\set{\lambda_2}$ and $\set{\lambda_1,\lambda_2}$ on the \tjn $(a;[p,\tup{x,y}]; (b||c);[q,\tup{x,y}];d)$. Here, type assignments are as follows: 
	$\alpha(p_x)=\alpha(q_x)=\lambda_1$,
	$\alpha(p_y)=\alpha(q_y)=\lambda_2$ and 
	$\alpha(p)=\alpha(q)=\lambda_1\lambda_2$.}
	\label{fig:proj-union-example}
\end{figure}
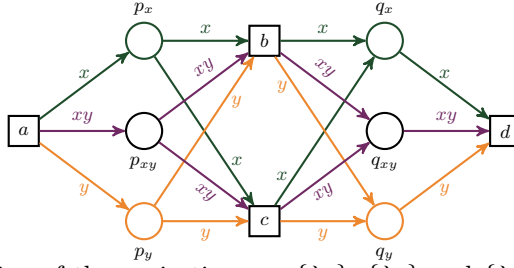

A more surprising observation is that composing type projections of a \tjn may not result in a \tjn. Take for example the net from Figure~\ref{fig:proj-union-example}. 
Both its projections on $\set{\lambda_1}$ and $\set{\lambda_2}$ are \tjns. 
However, bringing them together using the composition operator results in a \tpnid that is not \tjn: indeed, since the ``copies'' of place $p$ appear in three places, and all such copies have same pre- and post-sets (and only differ by their respective types), it is impossible to apply identifier elimination rule \emph{R6} from \defref{typed_jackson_net}.

As one may observe from the above example, 
the only difference between $[p_{xy},\tup{\lambda_1,\lambda_2}]$ and its copies $p_x$ and $p_y$ is in their respective types, whereas the identifiers carried by $p_x$ and $p_y$ are always contained in $p_{xy}$, and thus both $p_x$ and $p_y$ can be seen as subsidiary with respect to $p_{xy}$. 
We formalize this observation using the notion of \emph{minor places}:  
a place $p$ is minor to some place $q$ if both $p$ and $q$ have identical pre- and post-sets, and the type of $q$ subsumes the one of $p$.

\begin{definition}[Minor places] \deflabel{minor_places}
Let $N = (P_N, T_N, F_N, \alpha, \beta)$ be a \tpnid.
A place $p\in P$ is \emph{minor to} a place $q\in P$ iff the following holds:
\begin{compactitem}
\item  $\pre{p}=\pre{q}$, $\post{p}=\post{q}$ and $\alpha(p)\subset \alpha(q)$; 
\item  $\beta((t,p))=\restr{\beta((t,q))}{\type^{-1}(\alpha(p))}$, for each $t\in\pre{p}$;
\item  $\beta((p,t))=\restr{\beta((q,t))}{\type^{-1}(\alpha(p))}$, for each $t\in\post{p}$.
\end{compactitem}
\end{definition}

We show next that minor places can be added or removed without altering the overall behavior of the net.

\begin{lemma}
\label{lemma:minors}
Let $N = (P, T, F, \alpha, \beta)$ be a \tpnid with initial marking $m_0$ s.t. 
$m_0(p)=m_0(q)=\emptyset$, for $p,q\in P$, where $p$ is minor to $q$.
Let $N'=(P\setminus\set{p}, T, F\setminus(\set{(p,t)|t\in\post{p}}\cup\set{(t,p)|t\in\pre{p}}),\alpha,\beta)$ be a \tpnid obtained by eliminating from $N$ place $p$ .
Then $\transitionsystem{N,m_0}\sim^r \transitionsystem{N',m_0}$.
\end{lemma}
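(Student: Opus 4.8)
The plan is to exhibit an explicit strong bisimulation relating markings of $N$ and $N'$ that coincide on all places other than $p$, while controlling the deleted place $p$ through an invariant that ties $m(p)$ to $m(q)$. Concretely, I would take
$$R = \set{(m,m') \mid m'(r) = m(r) \text{ for all } r \in P \setminus \set{p}, \text{ and } m(p) = \restr{m(q)}{\type^{-1}(\alpha(p))}},$$
where $\restr{m(q)}{\type^{-1}(\alpha(p))}$ projects each identifier vector of $m(q)$ (of type $\alpha(q)$) onto its $\alpha(p)$-typed coordinates and aggregates multiplicities. Rootedness is immediate: since $m_0(p) = m_0(q) = \emptyset$, the invariant clause reads $\emptyset = \restr{\emptyset}{\type^{-1}(\alpha(p))}$, so $(m_0,m_0) \in R$, and the off-$p$ clause holds trivially.

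A preparatory observation is that deleting $p$ changes neither the variable sets nor the binding and freshness conditions of any transition. By \defref{minor_places}, $\beta((p,t))$ and $\beta((t,p))$ are restrictions of $\beta((q,t))$ and $\beta((t,q))$, so every variable on an arc incident to $p$ already occurs on the corresponding arc incident to $q$, which is retained in $N'$. Hence $\invar{t}$, $\outvar{t}$, $\newvar{t}$, and $\delvar{t}$ coincide in $N$ and $N'$, so a function $\psi : \V \to \I$ is a legal binding for $t$ in one net iff it is in the other. Moreover, the invariant clause forces the identifiers of $m(p)$ to be contained in those of $m(q)$, whence $\Id(m) = \Id(m')$ and the freshness requirement $\psi(v) \notin \Id(m)$ agrees across the two nets. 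Both simulation directions then reduce to the same step analysis, with a case split on whether $t$ produces into or consumes from $p$ (equivalently $q$, since $\pre{p} = \pre{q}$ and $\post{p} = \post{q}$), using the commutation $\rho_\psi(\beta((t,p))) = \restr{\rho_\psi(\beta((t,q)))}{\type^{-1}(\alpha(p))}$ (and symmetrically on the input arc), which holds because $\psi$ is type-preserving.

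The forward direction, $N \prec_R N'$, is the easy one: removing $p$ only deletes an enabling constraint, so any $(t,\psi)$ firable in $(N,m)$ is firable in $(N',m')$; for every remaining place the update formula is literally identical in the two nets, so the successor markings again agree off $p$, and the invariant at $p$ is re-established by a short computation using that restriction is additive and commutes with truncated subtraction whenever the subtrahend is dominated.

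The backward direction, $N' \prec_{R^{-1}} N$, is the main obstacle, and the invariant is exactly what resolves it. The only way a step firable in $N'$ could fail in $N$ is if $t$ consumes from $p$ while $p$ is underfilled. So suppose $t \in \post{p} = \post{q}$; firability in $(N',m')$ gives $\rho_\psi(\beta((q,t))) \le m'(q) = m(q)$. Applying the monotone restriction and using the commutation above yields $\rho_\psi(\beta((p,t))) = \restr{\rho_\psi(\beta((q,t)))}{\type^{-1}(\alpha(p))} \le \restr{m(q)}{\type^{-1}(\alpha(p))} = m(p)$ by the invariant clause of $R$, so $t$ is enabled in $(N,m)$. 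Firing it yields $m_2$ agreeing with $m_2'$ off $p$ and satisfying the invariant at $p$ just as in the forward case, so $(m_2,m_2') \in R$. This closes the bisimulation and gives $\transitionsystem{N,m_0} \sim^r \transitionsystem{N',m_0}$.
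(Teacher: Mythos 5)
Your proposal is correct and follows essentially the same route as the paper: the paper's proof sketch also defines a relation that equates markings on all places other than $p$ and ties the content of $p$ to that of $q$ (the paper states this as $m(p)(\cname{id})=m'(q)(\cname{id})$ together with $|m(p)|=|m'(q)|$, which amounts to your projection invariant), and then appeals to the firing rule and the coincidence of pre- and post-sets of $p$ and $q$. Your write-up merely makes explicit the details the paper leaves implicit, in particular the preservation of binding legality and freshness and the monotonicity argument in the backward simulation direction.
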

%\andy{we need to define bisimulation of two \tpnids}
\begin{proof}(sketch)
It is enough to define a relation $Q\subseteq \pnreachable{N}{m_0}\times \pnreachable{N'}{m_0}$ s.t. $(m,m')\in Q$ iff $m(r)=m'(r)$, for $r\in P\setminus\set{p}$, and $m(p)(\cname{id})=m'(q)(\cname{id})$, for all $\cname{id}\in\colset(\place)$, and $|m(p)|=|m'(q)|$.
Then the lemma statement directly follows from the firing rule of \tpnids and that pre- and post-sets of $p$ and $q$ coincide.
\end{proof}

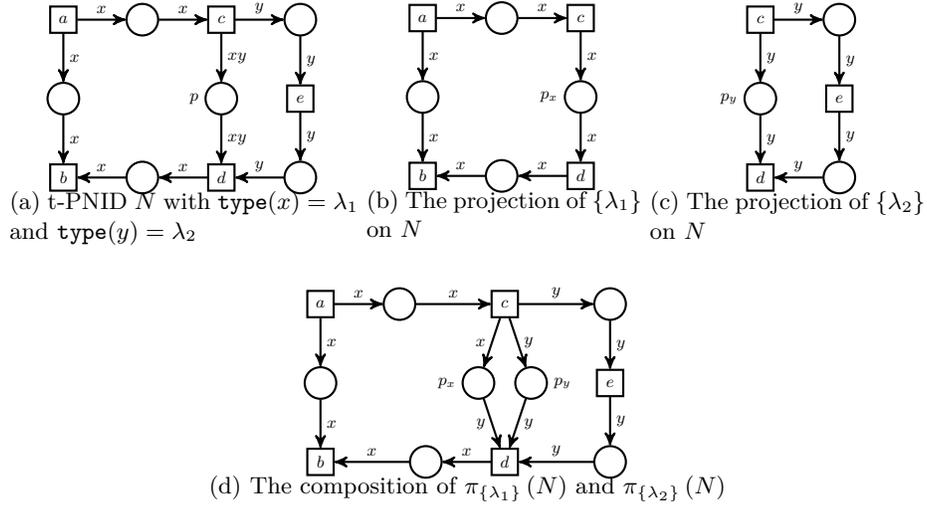
\begin{figure}[t!]
  \centering
	\begin{subfigure}{.38\textwidth}
		\centering
		 	\begin{tikzpicture}[->,>=stealth',auto,x=10mm,y=1cm,node distance=15mm and 3mm,thick,  every node/.style={scale=0.7}]
			\node[tr] (a) {$a$};
			\node[pl,right of = a] (p1) {};
			\node[tr,right of = p1] (c) {$c$};			
			\node[pl,right of = c] (p5){};
			\node[tr,below of = p5] (e){$e$};
			 \node[pl,below of = e] (p6){};
			\node[pl,below of = a] (p2) {};
			\node[tr,below of = p2] (b){$b$};
			\node[pl,below of = c,label=left:$p$] (p3) {};
			\node[tr,below of = p3] (d) {$d$};
			\node[pl, left of = d] (p4){};
			\path[->]
				(a) edge node[above]{$x$} (p1)
				(a) edge node[right]{$x$} (p2)
				(p1) edge node[above]{$x$} (c)
				(p2) edge node[right]{$x$} (b)
				(p4) edge node[above]{$x$} (b)
				(d) edge node[above]{$x$}  (p4)
				(c) edge node[right]{$xy$} (p3)
				(p3) edge node[right]{$xy$}  (d)
				(c) edge node[above]{$y$}  (p5)
				(p5) edge node[right]{$y$} (e)
				(e) edge node[right]{$y$} (p6)
				(p6) edge node[above]{$y$} (d);  
			\end{tikzpicture}
        \caption{\tpnid $N$ with $\type(x)=\lambda_1$ and $\type(y)=\lambda_2$\label{fig:singleton-tjn-n}}
	\end{subfigure}
	\begin{subfigure}{.3\textwidth}
		\centering
		\begin{tikzpicture}[->,>=stealth',auto,x=10mm,y=1cm,node distance=15mm and 3mm,thick,  every node/.style={scale=0.7}]
			\node[tr] (a) {$a$};
			\node[pl,right of = a] (p1) {};
			\node[tr,right of = p1] (c) {$c$};			
			\node[pl,below of = a] (p2) {};
			\node[tr,below of = p2] (b){$b$};
			\node[pl,below of = c,label=left:$p_x$] (p3) {};
			\node[tr,below of = p3] (d) {$d$};
			\node[pl, left of = d] (p4){};
			\path[->]
				(a) edge node[above]{$x$} (p1)
				(a) edge node[right]{$x$} (p2)
				(p1) edge node[above]{$x$} (c)
				(p2) edge node[right]{$x$} (b)
				(p4) edge node[above]{$x$} (b)
				(d) edge node[above]{$x$}  (p4)
				(c) edge node[right]{$x$} (p3)
				(p3) edge node[right]{$x$}  (d);
			\end{tikzpicture}
        \caption{The projection of $\set{\lambda_1}$ on $N$ \label{fig:singleton-tjn-m}}
	\end{subfigure}
	\begin{subfigure}{.3\textwidth}
		\centering
		\begin{tikzpicture}[->,>=stealth',auto,x=10mm,y=1cm,node distance=15mm and 3mm,thick,  every node/.style={scale=0.7}]
			\node[tr] (c) {$c$};			
			\node[pl,right of = c] (p5){};
			\node[tr,below of = p5] (e){$e$};
			 \node[pl,below of = e] (p6){};
			\node[pl,below of = c,label=left:$p_y$] (p3) {};
			\node[tr,below of = p3] (d) {$d$};
			\path[->]
				(c) edge node[right]{$y$} (p3)
				(p3) edge node[right]{$y$}  (d)
				(c) edge node[above]{$y$}  (p5)
				(p5) edge node[right]{$y$} (e)
				(e) edge node[right]{$y$} (p6)
				(p6) edge node[above]{$y$} (d);  
			\end{tikzpicture}
        \caption{The projection of $\set{\lambda_2}$ on $N$ \label{fig:singleton-tjn-k}}
	\end{subfigure}
	\begin{subfigure}{.7\textwidth}
		\centering
		 	\begin{tikzpicture}[->,>=stealth',auto,x=10mm,y=1cm,node distance=15mm and 3mm,thick,  every node/.style={scale=0.7}]
			\node[tr] (a) {$a$};
			\node[pl,right of = a] (p1) {};
			\node[tr,right of = p1,xshift=5mm] (c) {$c$};			
			\node[pl,right of = c,xshift=5mm] (p5){};
			\node[tr,below of = p5] (e){$e$};
			 \node[pl,below of = e] (p6){};
			\node[pl,below of = a] (p2) {};
			\node[tr,below of = p2] (b){$b$};
			\node[pl,below of = c,label=left:$p_x$,xshift=-.5cm] (px) {};
			\node[pl,below of = c,label=right:$p_y$,xshift=.5cm] (py) {};

			\node[tr,below of = px,xshift=.5cm] (d) {$d$};
			\node[pl, left of = d] (p4){};
			\path[->]
				(a) edge node[above]{$x$} (p1)
				(a) edge node[right]{$x$} (p2)
				(p1) edge node[above]{$x$} (c)
				(p2) edge node[right]{$x$} (b)
				(p4) edge node[above]{$x$} (b)
				(d) edge node[above]{$x$}  (p4)
				(c) edge node[left]{$x$} (px)
				(px) edge node[left]{$y$}  (d)
				(c) edge node[right]{$y$} (py)
				(py) edge node[right]{$y$}  (d)
				(c) edge node[above]{$y$}  (p5)
				(p5) edge node[right]{$y$} (e)
				(e) edge node[right]{$y$} (p6)
				(p6) edge node[above]{$y$} (d);  
			\end{tikzpicture}
        \caption{The composition of  $\project{\set{\lambda_1}}{N}$ and $\project{\set{\lambda_2}}{N}$\label{fig:compose-singleton}}
	\end{subfigure}
\caption{\tpnid $N$ (\ref{fig:singleton-tjn-n}), its singleton projections  and their composition }\label{fig:singleton-example}
\end{figure}

Let us now address the reconstructability property.
In a nutshell, a net is reconstructable if composing all of its type projections 
returns the same net. 
This property is not that trivial to obtain.
For example, let us consider singleton projections (that is, projections $\project{\set{\lambda}}{N}$ obtained for each $\lambda\in\type_\Lambda(N)$)
of the net in \figref{singleton-example}.
It is easy to see that such projections ``ignore'' interactions between objects (or system components).
Thus, the composition of the singleton projections $\project{\set{\lambda_1}}{N}$ and $\project{\set{\lambda_2}}{N}$ from \figref{singleton-example} does not result in a model
that merges $p_x$ and $p_y$ in one place as the composition operator cannot recognize component interactions between such projections. 
This is reflected in \figref{compose-singleton}.

To be able to reconstruct the original model from its projections (or at least do it approximately well), 
one needs to consider a projection reflecting component interactions. In the case of the net from Figure~\ref{fig:singleton-tjn-n}, its non-singleton projection $\project{\set{\lambda_1,\lambda_2}}{N}$ is depicted in Figure~\ref{fig:interactions-n}.
Now, using this projection we can obtain a composition (see Figure~\ref{fig:compose-full}) that closely resembles $N$. 
Notice that, in this composition, copies of the interaction place $p$ appear three times as places $p_x$, $p_y$ and $p_{xy}$, respectively.
It is also easy to see that places $p_x$ and $p_y$ are minor to $p_{xy}$, and 
$\alpha(p)=\alpha(p_{xy})$ witnesses that $\project{\set{\lambda_1,\lambda_2}}{N}$ is the maximal projection defined over types of $N$ s.t. the correct type of $p$ is ``reconstructed''.
This leads us to the following result stipulating the reconstructability property of typed Jackson nets.

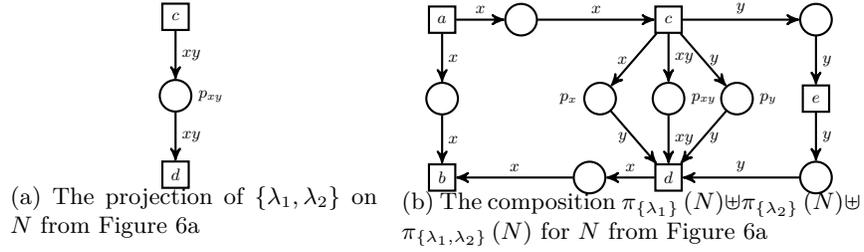
\begin{figure}[t!]
  \centering
	\begin{subfigure}{.4\textwidth}
		\centering
		\begin{tikzpicture}[->,>=stealth',auto,x=10mm,y=1cm,node distance=15mm and 3mm,thick,  every node/.style={scale=0.7}]
			\node[tr] (c) {$c$};			
			 \node[pl,below of = c,label=right:$p_{xy}$] (p){};
			\node[tr,below of = p] (d) {$d$};
			\path[->]
				(c) edge node[right]{$xy$} (p)
				(p) edge node[right]{$xy$} (d);  
			\end{tikzpicture}
        \caption{The projection of $\set{\lambda_1,\lambda_2}$ on $N$ from Figure~\ref{fig:singleton-tjn-n} \label{fig:interactions-n}}
	\end{subfigure}\quad
	\begin{subfigure}{.5\textwidth}
		\centering
		 	\begin{tikzpicture}[->,>=stealth',auto,x=10mm,y=1cm,node distance=15mm and 3mm,thick,  every node/.style={scale=0.7}]
			\node[tr] (a) {$a$};
			\node[pl,right of = a] (p1) {};
			\node[tr,right of = p1,xshift=13mm] (c) {$c$};			
			\node[pl,right of = c,xshift=13mm] (p5){};
			\node[tr,below of = p5] (e){$e$};
			 \node[pl,below of = e] (p6){};
			\node[pl,below of = a] (p2) {};
			\node[tr,below of = p2] (b){$b$};
			\node[pl,below of = c,label=left:$p_x$,xshift=-13mm] (px) {};
			\node[pl,below of = c,label=right:$p_y$,xshift=13mm] (py) {};
			\node[pl,below of = c,label=right:$p_{xy}$] (p) {};

			\node[tr,below of = px,xshift=13mm] (d) {$d$};
			\node[pl, left of = d] (p4){};
			\path[->]
				(a) edge node[above]{$x$} (p1)
				(a) edge node[right]{$x$} (p2)
				(p1) edge node[above]{$x$} (c)
				(p2) edge node[right]{$x$} (b)
				(p4) edge node[above]{$x$} (b)
				(d) edge node[above]{$x$}  (p4)
				(c) edge node[left]{$x$} (px)
				(px) edge node[left]{$y$}  (d)
				(c) edge node[right]{$xy$} (p)
				(p) edge node[right]{$xy$}  (d)
				(c) edge node[right]{$y$} (py)
				(py) edge node[right]{$y$}  (d)
				(c) edge node[above]{$y$}  (p5)
				(p5) edge node[right]{$y$} (e)
				(e) edge node[right]{$y$} (p6)
				(p6) edge node[above]{$y$} (d);  
			\end{tikzpicture}
			  \caption{The composition  $\project{\set{\lambda_1}}{N}\composeOperator\project{\set{\lambda_2}}{N}\composeOperator\project{\set{\lambda_1,\lambda_2}}{N}$  for $N$ from Figure~\ref{fig:singleton-tjn-n} \label{fig:compose-full}}
%        \caption{The composition  $\biguplus\limits_{\Upsilon\subseteq\type_\Lambda(N)}\project{\Upsilon}{N}$ for $N$ from Figure~\ref{fig:singleton-tjn-n} \label{fig:compose-full}}
	\end{subfigure}
\caption{Adding the projection $\project{\set{\lambda_1,\lambda_2}}{N}$ reflecting interactions to the composition results in the original net $N$ modulo places minor to $p$ (such as $p_x$ and $p_y$). }\label{fig:compose-interactions}
\end{figure}

\begin{theorem}\thmlabel{reconstructability}
Let $N = (P, T, F, \alpha, \beta)$ be a \tjn. 
Then $\transitionsystem{N,\emptyset}\sim^r \transitionsystem{N',\emptyset}$, where\\ $N'=\biguplus\limits_{\emptyset\subset\Upsilon\subseteq\type_\Lambda(N)}\project{\Upsilon}{N}$. 
\end{theorem}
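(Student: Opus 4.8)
The plan is to show that $N'$ is, up to a collection of redundant places, a verbatim copy of $N$, and then to strip those redundant places away by repeated application of the minor-place lemma (Lemma~\ref{lemma:minors}). Concretely, I would first pin down the shape of $N' = \biguplus_{\emptyset \subset \Upsilon \subseteq \type_\Lambda(N)} \project{\Upsilon}{N}$. Reading \defref{composition} together with \defref{type_projection}, the transitions of all projections are identified: since every transition of a \tjn is adjacent to at least one place $p$, choosing $\Upsilon = \alpha(p)$ puts it in $T_\Upsilon$, whence $\bigcup_\Upsilon T_\Upsilon = T$. The places, by contrast, are copied per projection: each $p \in P$ contributes one copy $p_\Upsilon$ of type $\Upsilon$ for every non-empty $\Upsilon \subseteq \alpha(p)$, and an arc $(t,p)$ of $N$ induces the arc $(t,p_\Upsilon)$ with inscription $\restr{\beta((t,p))}{\type^{-1}(\Upsilon)}$ in each such copy (symmetrically for $(p,t)$).

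Next I would single out, for each $p$, the \emph{full copy} $p_{\alpha(p)}$. Because the variable vectors on arcs adjacent to $p$ all have type $\alpha(p)$ (\defref{tpnids}), the restriction $\restr{\beta(f)}{\type^{-1}(\alpha(p))}$ leaves those inscriptions untouched, so the sub-net of $N'$ induced by $T$ together with $\set{p_{\alpha(p)} \mid p \in P}$ is isomorphic to $N$ under $p_{\alpha(p)} \mapsto p$. All remaining copies $p_\Upsilon$ with $\Upsilon \subsetneq \alpha(p)$ are surplus. I would then verify that each such $p_\Upsilon$ is \emph{minor} to its full copy $p_{\alpha(p)}$ in the sense of \defref{minor_places}: adjacency in $N'$ is inherited arc-for-arc from $N$ and the transition set is shared, so $\pre{p_\Upsilon} = \pre{p} = \pre{p_{\alpha(p)}}$ and $\post{p_\Upsilon} = \post{p} = \post{p_{\alpha(p)}}$; by construction $\alpha(p_\Upsilon) = \Upsilon \subsetneq \alpha(p)$; and the inscription equalities $\beta((t,p_\Upsilon)) = \restr{\beta((t,p_{\alpha(p)}))}{\type^{-1}(\Upsilon)}$ hold by the definition of $\beta_\Upsilon$.

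With minority established, the conclusion follows by induction on the number of surplus copies. The initial marking is empty, so the hypothesis $m_0(p_\Upsilon) = m_0(p_{\alpha(p)}) = \emptyset$ of Lemma~\ref{lemma:minors} is met, and removing a surplus place $p_\Upsilon$ yields a \tpnid whose transition system is strongly rooted bisimilar to the previous one. Since deleting a place removes only its own arcs and never touches the pre-/post-sets of the other places, every remaining surplus copy stays minor to its (never-removed) full copy, so the step repeats. Chaining these bisimilarities, using that strong rooted bisimilarity is transitive, collapses $N'$ onto the full-copy net, which is isomorphic to $N$ and hence induces a bisimilar transition system; this gives $\transitionsystem{N,\emptyset} \sim^r \transitionsystem{N',\emptyset}$.

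The main obstacle I anticipate is not the bisimulation argument but making the bookkeeping of the first two paragraphs airtight: justifying that the composition genuinely keeps the copies $p_\Upsilon$ apart (per-type renaming, as the figures suggest) while fusing the transitions, and then checking that this fusion does not disturb the variable accounting of each transition — its emitting/collecting classification and the freshness test against $\Id(m)$ — so that firings in $N'$ restrict coherently onto each copy. Once it is clear that the identifiers residing in a surplus copy $p_\Upsilon$ are always the $\Upsilon$-projection of those in $p_{\alpha(p)}$, the minor-place lemma absorbs all of this and the remainder is routine.
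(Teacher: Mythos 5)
Your argument is correct and follows essentially the same route as the paper's own proof sketch: identify, for each place $p$, the full copy arising from the projection with $\Upsilon=\alpha(p)$, observe that all other copies contributed by smaller $\Upsilon$ are minor to it, and then eliminate them via Lemma~\ref{lemma:minors} to obtain the rooted strong bisimulation. Your treatment is in fact more explicit than the paper's about the bookkeeping (keeping place copies distinct while fusing transitions), which the paper leaves implicit.
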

\begin{proof} (sketch)
The proof immediately follows from the next observation.
Among all possible projections, for each place $p\in P$ there exists a projection $\project{\Upsilon}{N}$ such that $\alpha(p)=\Upsilon$. This also means that $\project{\Upsilon}{N}$ contains $p$ and that all other projections $\project{\Upsilon'}{N}$ with $\Upsilon'\subset\Upsilon$ will at most include the minors of $p$. 
Following \defref{composition}, it is easy to see that the composition of all the projections yields a \tjn identical to $N$ modulo additional place minors introduced by some of the projections. Showing that the obtained net is bisimilar to $N$ can be done by analogy with Lemma~\ref{lemma:minors}.
\end{proof}

Notice that the above result can be made stronger if all the additional minors (i.e., minors that were not present originally in $N$) are removed using reduction rules from \defref{typed_jackson_net}. For simplicity, given a \tpnid $N$ with the set of places $P$, we denote by $\lfloor P \rfloor$ the set of its minor places. 

\begin{corollary}\corlabel{reconstructability}
Let $N$ be a \tjn and $N'$ is as in \thmref{reconstructability}. 
Then $(N,\emptyset)\leftrightsquigarrow(N',\emptyset)$, if $\lfloor P \rfloor = \lfloor P' \rfloor$, where $P$ and $P'$ are respectively the sets of places of $N$ and $N'$.
\end{corollary}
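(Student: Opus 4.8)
The plan is to promote the rooted bisimulation of \thmref{reconstructability} to a genuine graph isomorphism by exhibiting $N$ as a distinguished subnet of $N'$ and showing that the hypothesis $\lfloor P \rfloor = \lfloor P' \rfloor$ forces this subnet to exhaust $N'$. First I would recall from the proof of \thmref{reconstructability} the anatomy of $N' = \biguplus_{\emptyset \subset \Upsilon \subseteq \type_\Lambda(N)} \project{\Upsilon}{N}$: every transition of $N$ survives unchanged in $\project{\Upsilon}{N}$ for every $\Upsilon$ small enough, and since \defref{composition} unions transition sets without renaming, $T' = T$; moreover, for each place $p \in P$ the projection onto $\Upsilon = \alpha(p)$ retains $p$ with its full type and its entire incidence, because every transition in $\pre{p} \cup \post{p}$ touches the surviving type-$\alpha(p)$ copy of $p$ and hence survives. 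This yields an injection $\iota : P \to P'$ sending $p$ to this \emph{main} copy, which extends to a subnet embedding preserving arcs, types, and inscriptions (by \defref{type_projection} the inscriptions at the main copy are the full $\beta$), so that $\iota(N)$ is the induced subnet of $N'$ on $\iota(P)\cup T$ and is isomorphic to $N$.

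Next I would characterise $P' \setminus \iota(P)$. Every place of $N'$ is a copy of some $p \in P$ produced by a projection $\project{\Upsilon}{N}$ with $\emptyset \subset \Upsilon \subseteq \alpha(p)$; the copy with $\Upsilon = \alpha(p)$ is $\iota(p)$, while any copy with $\Upsilon \subsetneq \alpha(p)$ shares the pre- and post-set of $\iota(p)$ (again all adjacent transitions survive) and carries the strictly smaller type $\Upsilon$, with inscriptions given by the restriction in \defref{type_projection}. By \defref{minor_places} each such copy is therefore minor to $\iota(p)$ in $N'$, so $P' \setminus \iota(P) \subseteq \lfloor P' \rfloor$.

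The crux is to pin down $\lfloor P' \rfloor$ exactly, namely $\lfloor P' \rfloor = \iota(\lfloor P \rfloor) \cup (P' \setminus \iota(P))$ under the identification $P \cong \iota(P)$. The inclusion $\iota(\lfloor P \rfloor) \subseteq \lfloor P' \rfloor$ is immediate since $\iota$ preserves pre-/post-sets, type containment and the inscription restriction, so minority in $N$ transfers to minority in $N'$. For the converse on main copies I would argue that if $\iota(p)$ is minor to some $r \in P'$, then $r$ is a copy of some $q \in P$ sharing $\pre{q}$ and $\post{q}$ with $r$, whence $\pre{\iota(p)} = \pre{\iota(q)}$, $\post{\iota(p)} = \post{\iota(q)}$ and $\alpha(p) \subset \alpha(r) \subseteq \alpha(q)$; as $\alpha(p) = \alpha(q)$ is excluded, $p$ is minor to $q$ in $N$. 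Hence the main copies lying in $\lfloor P' \rfloor$ are exactly $\iota(\lfloor P \rfloor)$, and the remaining members of $\lfloor P' \rfloor$ are precisely the extra copies $P' \setminus \iota(P)$.

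Finally I would close the argument: under the identification $P \cong \iota(P)$ the hypothesis $\lfloor P \rfloor = \lfloor P' \rfloor$ reads $\iota(\lfloor P \rfloor) = \iota(\lfloor P \rfloor) \cup (P' \setminus \iota(P))$, forcing $P' \setminus \iota(P) = \emptyset$. Thus $\iota$ is a bijection on places, and together with $T' = T$ and the arc-preservation already established it witnesses $(N,\emptyset) \leftrightsquigarrow (N',\emptyset)$. I expect the main obstacle to be the bookkeeping of the previous paragraph, i.e.\ proving both directions of ``a main copy is minor in $N'$ iff the underlying place is minor in $N$'' while allowing the witnessing place $r$ to be one of the additional minor copies rather than a main copy; the rest follows directly from \defref{type_projection}, \defref{composition} and the structure of $N'$ extracted in \thmref{reconstructability}.
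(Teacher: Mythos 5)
Your proposal is correct and follows essentially the same route as the paper: the proof of \thmref{reconstructability} already establishes that $N'$ is $N$ together with additional minor copies of places contributed by the smaller projections, and the hypothesis $\lfloor P \rfloor = \lfloor P' \rfloor$ rules out any such additional copies, yielding the isomorphism directly. Your write-up is considerably more detailed than the paper's one-line justification (in particular the explicit bookkeeping that $\lfloor P' \rfloor = \iota(\lfloor P \rfloor) \cup (P' \setminus \iota(P))$), but it is the same underlying argument rather than a different method.
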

The above result can be obtained by complementing the proof of \thmref{reconstructability} with a step that applies finitely many \tjn reduction rules to all the minor places that are in $N'$ and not in $N$.

	\section{A Framework for Rediscoverability} \seclabel{framework}
%In the previous section, we showed that \tjns are reconstructable by decomposing the net into projections for all possible combinations of types present in the net. 
In the previous section, we showed that \tjns enjoy the reconstructability property: given a \tjn, a composition of \emph{all} its (proper) type projections yields a \tjn that is strongly bisimilar to the original one.\footnote{Such nets are also isomorphic if minor places of the composition are removed by consecutively applying the reduction rules from \defref{typed_jackson_net}.}

In this section, we propose a framework to rediscover systems of interacting processes that rely on this property.
The framework builds upon a divide and conquer strategy~\cite{TourPKS2022agentdiscovery}.
The first step of the approach is to divide the event logs over all possible projections.
For this, we translate the notion of event logs to event logs of interacting systems, and show that if these event logs are generated by a \tjn, projections on these event logs have a special property: the projected event log can be replayed by the projected net. 
In other words, there is no distinction between the projection on the event log, or that the projected net generated the event log. 
This observation forms the basis of the proposed framework for rediscoverability.
In the second step, we conquer the discoverability problem of the system of interacting processes by first discovering a model for each of the projections, and then composing these projections into the original system.
If the event log and discovery algorithm guarantee the defined properties, composition yields rediscoverability.

\subsection{Event Logs and Execution Traces}
In process discovery, an event log is represented as a (multi)set of sequences of events (called traces), where each sequence represents an execution history of a process instance. 
Traditional process discovery assumes the process to be a \wfnet. 
Consequently, each trace in an event log should correspond to a sequence of transition firings of the workflow net.
If this is the case, the event log is said to be generated by the \wfnet. 
We generalize this notion to marked Petri nets.

\begin{definition}[Event Log]
Given a set of transitions $T$, a set of traces $L \subseteq \seq{T}$ is called an \emph{event log}.
An event log $L$ \emph{is generated by} a marked Petri net $(N,m)$ if $\pnenabled{(N,m)}{\sigma}{}$ for all $\sigma \in L$, i.e., $L \subseteq \mathcal{L}(N,m_0)$.
\end{definition}

\begin{table}[t]
	\caption{Firing sequence for the \tpnid in \figref{runningexample}}\tbllabel{sequencerunningexample}
	\begin{minipage}{.3\textwidth}
		\begin{tabular}{l|c|c|c}
			\textbf{transition} & \textbf{x} & \textbf{y} & \textbf{z} \\\hline
			$A$      &     $p1$ &            &             \\
			$A$      &     $p2$ &            &             \\
			$T$      &          &            &   $c1$      \\
			$G$      &          &   $o1$     &   $c1$      \\
			$C$      &     $p1$ &            &             \\
			$E$      &     $p2$ &   $o1$     &             \\
		\end{tabular}
	\end{minipage}
	\hfill
	\begin{minipage}{.3\textwidth}
		\begin{tabular}{l|c|c|c}
			\textbf{transition} 
			& \textbf{x} & \textbf{y} & \textbf{z} \\\hline
			$T$      &          &            &   $c2$ \\
			$H$      &          &  $o1$      &        \\
			$L$      &          &  $o1$      &        \\ 
			$J$      &          &  $o1$      &        \\ 
			$B$      &  $p2$    &            &        \\ 
			$O$      &          &  $o1$      &        \\
		\end{tabular}
	\end{minipage}
	\hfill
	\begin{minipage}{.3\textwidth}
		\begin{tabular}{l|c|c|c}
			\textbf{transition} & \textbf{x} & \textbf{y} & \textbf{z} \\\hline
			$D$      &   $p1$   &            &        \\ 
			$V$      &          &            &$c2$    \\ 
			$K$      &          &   $o1$     &        \\ 
			$Z$      &          &   $o1$     & $c1$   \\ 
			$V$      &          &            & $c1$   \\ 
			$B$      &  $p1$    &            &        \\
		\end{tabular}
	\end{minipage}
\end{table}

Each sequence in a single process event log assumes to start from the initial marking of the \wfnet. 
A marked \tpnid, instead, represents a continuously executing system, for which, given a concrete identifier, exists a single observable execution that can be recorder in an event log. 
%In other words, the system is running, and has a single execution from which events of a given identifier can be observed and recorded in an event log. 
Thus, event logs are partial observations of a larger execution within the system: an event log for a certain type captures only the relevant events that contain identifiers of that type, and stores these in order of their execution.
Since each transition firing consists of a transition and a binding,  a \tpnid firing sequence induces an event log for each set of types $\Upsilon$. 
Intuitively, this induced event log is constructed by a filtering process. 
For each possible identifier vector for $\Upsilon$ we keep a firing sequence. 
Each transition firing is inspected, and if its binding satisfies an identifier vector of $\Upsilon$, it is added to the corresponding sequence.

\begin{definition}[Induced Event Log]
	Let $(N,m_0)$ be a marked \tpnid. Given a non-empty set of types $\Upsilon \subseteq \type_{\Lambda}(N)$, the \emph{$\Upsilon$-induced event log} of a firing sequence $\eta \in \L(N,m_0)$ is defined by:
%	\begin{equation*}
$		\mathit{Log}_\Upsilon(\eta) = \{ \proj{\eta}{i} \mid i \in (\Id(\eta) \intersect I(\Upsilon))^{\length{\Upsilon}} \} 
$,%	\end{equation*}
	where $\proj{\eta}{i}$ is inductively defined by
	\begin{inparaenum}[\it (1)]
		\item $\proj{\emptysequence}{i} = \emptysequence$, 
		\item $\proj{(\sequence{(t,\psi)}\concat \eta )}{i} = \sequence{(t,\psi)}\concat\proj{\eta}{i}$ if $\setsuppp{i}\subseteq \rng{\psi}$, and
		\item $\proj{(\sequence{(t,\psi)}\concat \eta )}{i} = \proj{\eta}{i}$ otherwise.
		\end{inparaenum}
\end{definition}

Different event logs can be induced from a firing sequence.
Consider, for example, the firing sequence of the net from \figref{runningexample} represented as table in \tblref{sequencerunningexample}.
As we cannot deduce the types for each of the variables from the firing sequences in \tblref{sequencerunningexample}, we assume that there is a bijection between variables and types, i.e., that each variable is uniquely identified by its type, and vice-versa.
Like that, we can create an induced log for each variable, as the type and variable name are interchangeable. %\andy{shall we say that with a slight abuse of notation, we use variable names instead of variable types (or that, given the assumption on the existence of such bijection, we use variable names and types interchangeably)?}
For example, the $x$-induced event log is $\mathit{Log}_{\{x\}} = \{\sequence{A,E,B},\sequence{A,C,D,B}\}$, and the $z$-induced event log is $\mathit{Log}_{\{z\}} = \{\sequence{T,G,Z,V},\sequence{T,V}\}$.
Similarly, event logs can be also induced for combinations of types.
In this example, the only non-empty induced event logs on combined types are $\mathit{Log}_{\{y,z\}} = \{\sequence{G,Z}\}$ and $\mathit{Log}_{\{x,y\}} = \{\sequence{E}\}$.

As the firing sequence in \tblref{sequencerunningexample} shows, 
transition firings (and thus also events) only show bindings of variables to identifiers.
For example, for firing $G$ with binding $y \mapsto o1$ and $z \mapsto c1$, it is not possible to derive the token types of the consumed and produced tokens directly from the table. 
Therefore, we make the following assumptions for process discovery on \tpnids:
\begin{enumerate}
	\item There are no ``black'' tokens: all places carry tokens with at least one type, and all types occur at most once in a place type, i.e., all places refer to at least one process instance.
	\item There is a bijection between variables and types, i.e., for each type exactly one variable is used.
	\item A G\"odel-like number $\mathscr G$ is used to order the types in place types, i.e., for any place $p$, we have $\mathscr G(\alpha(p)(i)) < \mathscr G(\alpha(p)(j))$ for $1 \leq i < j \leq \length{\alpha(p)}$ and $p \in P$.
	%\andy{why don't we assume that types are totally ordered? and say that $\alpha(p)(i) < \alpha(p)(j)$ for $1 \leq i < j \leq \length{\alpha(p)}$ and $p \in P$\\
	%\textbf{JMW}: Since the total order does not match a concept in real life, but is only a trick to get the mathematics simpler}
\end{enumerate}

%
%As shown in~\cite{vanderWerf2022}, any \wfnet can be transformed into a \tpnid. 
%Hence, we can assign a single object type to represent the process instance, say $c$.
%Each place is assigned with this object type $c$.
%An emitter transition ensures that new identifiers of type $c$ are produced in the initial place, and a collector transition removes finished process instances.
%
%\begin{definition}[EC-Closure~\cite{vanderWerf2022}]
%Given a \wfnet $N$, place type $\vec{\lambda} \in \seq{\Lambda}$, and variable vector $\vec{v} \in \seq{\varset}$, such that $\type_{\varset}(\vec{v}) = \vec{\lambda}$, its EC-closure is a \tpnid $\W(N,\vec{\lambda},\vec{v}) = (P_N, T_N, \union \{t_E, t_C\}, F_N \union \{(t_E,i_N),(f_N,t_C)\}, \alpha, \beta)$ with $\alpha(p) = \vec{\lambda}$ and $\beta(f) = \vec{v}$ for all $p \in P_N$ and $f \in F_N$.
%\end{definition}

\subsection{Rediscoverability of Typed Jackson Nets}
Whereas traditional process discovery approaches relate events in an event log to a single object: the process instance, object-centric approaches can relate events to many objects~\cite{GhahfarokhiPBA21}.
Most object-centric process discovery algorithms (e.g., \cite{aalstB20_discovering,LuNWF15}) use a divide and conquer approach, where ``flattening'' is the default implementation to divide the event data in smaller event logs. 
The flattening operation creates a trace for each object in the data set, and combines the traces of objects of the same type in an event log. 
As we have shown in \secref{decomposability}, singleton projections, i.e., those just considering types in isolation, are insufficient to reconstruct the \tjn that induced the object-centric event log. 
A similar observation is made for object-centric process discovery (cf.~\cite{aalst19_divergence,aalstB20_discovering,adams22_extractingfeatures}): flattening the event data into event logs generates inaccurate models.
Instead, reconstructability can only be achieved if all possible combinations of types are considered. 
Hence, for a divide and conquer strategy, the divide step should involve all possible combinations of types, i.e., each interaction between processes requires their own event log.
In the remainder of this section, we show that if all combinations of types are considered, flattening is possible, and traditional process discovery algorithms can be used to rediscover a system of interacting processes.

For a system of interacting processes, we consider execution traces, i.e., a firing sequence from the initial marking. 
Like that, event logs for specific types or combinations of types are induced from the firing sequence. 
The projection of the system on a type or combinations of types, results again in a \tjn. 
Similarly, if we project a firing sequence of a \tjn $N$ on a set of types~$\Upsilon$, then this projection is a firing sequence of the $\Upsilon$-projection on~$N$.
The property follows directly from the result that \tjn $N$ is weakly simulated by its $\Upsilon$-projection.

\begin{lemma}
Let $N$ be a \tjn, and let $\Upsilon \subseteq \type_{\Lambda}(N)$. Then 
$\hide{U}{\transitionsystem{N,\emptybag}}\preccurlyeq^r \transitionsystem{\project{\Upsilon}{N},\emptybag}$, with $U = T_N \setminus T_\Upsilon$.
\end{lemma}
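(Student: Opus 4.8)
The plan is to exhibit an explicit rooted weak simulation $R$ relating the reachable markings of $N$, with the transitions of $U$ hidden, to the reachable markings of the projection $\project{\Upsilon}{N}$. The natural candidate is a \emph{token projection}: since $\Upsilon$ is fixed throughout, I write $\widehat m$ for the marking of $\project{\Upsilon}{N}$ obtained from a marking $m$ of $N$ by restricting to the places $p$ with $\Upsilon\subseteq\alpha(p)$ and, in each such place, keeping only the $\Upsilon$-typed components of every token vector. I then set $R=\{(m,m')\mid m'=\widehat m\}$ and prove it is a weak simulation in the direction required. Rootedness is immediate, as $\widehat{\emptybag}=\emptybag$. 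A convenient simplification is that $\project{\Upsilon}{N}$ carries no silent transitions, so on the right-hand side every weak transition is either a single visible firing or idling; hence each move of $N$ must be matched either by staying put (when its label is hidden to $\tau$) or by exactly one firing of $\project{\Upsilon}{N}$.

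First I would split a move $\fire{m}{t,\psi}{m_1}$ of $N$ into two cases. If $t\in U=T_N\setminus T_\Upsilon$, then by the definition of $T_\Upsilon$ the transition $t$ is incident to no place with $\Upsilon\subseteq\alpha(p)$, so firing it leaves every such place untouched and $\widehat{m_1}=\widehat m=m'$; since the label is hidden, this is matched by the first disjunct of weak simulation (idle, remaining in $R$). If $t\in T_\Upsilon$, I would fire the same transition $t$ in $\project{\Upsilon}{N}$ under the binding $\psi$ restricted to the $\Upsilon$-typed variables. Enabledness transfers from $N$ to the projection: the tokens that $t$ consumes from $\Upsilon$-places are exactly the $\Upsilon$-projections of the tokens consumed in $N$, and these are present in $m'=\widehat m$ precisely because $t$ was enabled at $m$. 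A routine check against the firing rule, using that $\beta_\Upsilon$ is the restriction of $\beta$ to $\Upsilon$-typed variable vectors and that projection distributes over multiset addition and subtraction, then yields $\widehat{m_1}=m_1'$, so that $(m_1,m_1')\in R$ and the visible label is reproduced.

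The hard part will be the binding and freshness bookkeeping in the second case. Because the assumed bijection between variables and types lets a single variable $x_\lambda$ sit on an arc of $t$ in $N$ through a pre-place $p$ with $\Upsilon\not\subseteq\alpha(p)$, a variable that merely ``passes through'' $t$ in $N$ may become \emph{emitting} in $\project{\Upsilon}{N}$; thus the set $\newvar{t}$ computed in the projection need not coincide with the one computed in $N$. I therefore have to argue that the restricted binding still satisfies the projected freshness requirement, i.e.\ that any identifier assigned to a newly-emitting variable of the projection is absent from $\Id(\widehat m)$. The cleanest route is to exploit that $\project{\Upsilon}{N}$ is again a \tjn, as established in the preceding lemma, together with its identifier soundness from \thmref{tjn-is-identifier-sound}, which pins down its emit/collect structure, and to compare labels only up to the $\Upsilon$-relevant part of the binding, which is all the projection can observe once $U$ is hidden. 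Once this matching of bindings is fixed, the two verifications above close the argument and show that $R$ is a rooted weak simulation, giving the stated inequality.
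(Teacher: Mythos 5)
Your proposal takes essentially the same route as the paper: the paper's proof sketch defines exactly this token-projection relation (requiring $Q(m)(p)(\proj{a}{I(\Upsilon)}) = m(p)(a)$ for $p\in P_\Upsilon$) and asserts that the rooted weak simulation then "follows directly from the firing rule," which is precisely your case split on $t\in U$ versus $t\in T_\Upsilon$. If anything you are more careful than the paper, which silently elides the binding/freshness bookkeeping for variables that become emitting in $\project{\Upsilon}{N}$ --- the one point at which your own argument also remains a sketch.
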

\begin{proof} (sketch)
Let $N_\Upsilon = \proj{\Upsilon}{N} = (P_\Upsilon, T_\Upsilon, F_\Upsilon,\alpha_\Upsilon, \beta_\Upsilon)$.
We can define a relation $Q \subseteq \markings{N} \times \markings{\project{\Upsilon}{N}}$ s.t. $Q(m)(p)(\proj{a}{I(\Upsilon)}) = m(p)(a)$ if $p \in P_\Upsilon$ and $Q(m)(p) = m(p)$ otherwise. 
The rooted weak bisimulation of $Q$ follows directly from the firing rule of \tpnids.
\end{proof}

As the lemma shows, projecting a firing sequence yields a firing sequence for the projected net.
A direct consequence of the simulation relation is that, no matter whether we induce an event log from a firing sequence on the original net, or induce it from the projected firing sequence, the resulting event logs are the same.

\begin{corollary}
%	Let $N$ be a \tjn, and let $\eta \in \mathcal{L}(N, \emptybag)$. For $\Upsilon \subseteq \type_{\Lambda}(N)$ we have $\project{\Gamma}{\eta} \in \mathcal{L}(\project{\Upsilon}{N},\emptybag)$.
Let $(N,m_0)$ be a marked \tpnid. Given a set of types $\Upsilon \subseteq \type_{\Lambda}(N)$. 
Then $\mathit{Log}_\Upsilon(\eta) = \mathit{Log}_\Upsilon(\project{\Upsilon}{\eta})$.
\end{corollary}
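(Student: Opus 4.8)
The plan is to derive the statement from the weak bisimulation of the preceding lemma, reducing everything to a firing-by-firing comparison of the two filtering operations. First I would recall that, by that lemma, $\project{\Upsilon}{\eta}$ is precisely the firing sequence of $\project{\Upsilon}{N}$ obtained from $\eta$ by deleting every firing $(t,\psi)$ with $t \notin T_\Upsilon$ (these are the steps hidden by $U = T_N \setminus T_\Upsilon$) and restricting each surviving binding to the $\Upsilon$-typed variables, i.e. replacing $\psi$ by $\restr{\psi}{\type_{\varset}^{-1}(\Upsilon)}$. Since restriction discards only assignments to variables of types outside $\Upsilon$, the identifiers surviving in $\project{\Upsilon}{\eta}$ are exactly the $\Upsilon$-typed ones of $\eta$, so that $\Id(\project{\Upsilon}{\eta}) = \Id(\eta) \cap I(\Upsilon)$ and the index sets over which $i$ ranges agree. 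It then suffices to fix one index vector $i$ and prove $\proj{\eta}{i} = \proj{(\project{\Upsilon}{\eta})}{i}$; because the induced log records sequences of transitions (as in the examples $\mathit{Log}_{\{x\}} = \set{\sequence{A,E,B},\sequence{A,C,D,B}}$), I may ignore the difference between a full binding $\psi$ and its restriction and only track which firings are kept and in which order.

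I would prove this last equality by induction on $\eta$, where the inductive step reduces to a single claim about the leading firing $(t,\psi)$: it is retained by $\proj{\cdot}{i}$ in $\eta$ iff it is retained by $\proj{\cdot}{i}$ in $\project{\Upsilon}{\eta}$. Unfolding the definition of $\proj{\cdot}{i}$, this is the equivalence $\setsupp{i} \subseteq \rng{\psi} \iff \bigl(t \in T_\Upsilon \text{ and } \setsupp{i} \subseteq \rng{\restr{\psi}{\type_{\varset}^{-1}(\Upsilon)}}\bigr)$. The direction from right to left is immediate, since $\rng{\restr{\psi}{\type_{\varset}^{-1}(\Upsilon)}} \subseteq \rng{\psi}$. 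For the forward direction, the binding part is routine: the entries of $i$ are $\Upsilon$-typed, bindings are type-respecting, and by the standing assumption that variables are in bijection with types, every $\cname{id} \in \setsupp{i}$ lying in $\rng{\psi}$ is the image of a variable of type in $\Upsilon$, hence already lies in $\rng{\restr{\psi}{\type_{\varset}^{-1}(\Upsilon)}}$.

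The main obstacle is the remaining half of the forward direction: showing that whenever a single firing $(t,\psi)$ binds a complete $\Upsilon$-typed identifier vector $i$, the transition $t$ must belong to $T_\Upsilon$, that is, be adjacent to a place whose type contains all of $\Upsilon$. This is exactly the structural fact separating \tjns from arbitrary \tpnids: identifiers of different types become jointly visible in one firing only where they are carried together on a shared place, which for \tjns is guaranteed by the block generation rules, in particular rule~\ref{itm:tjn_R2}, which expands a transition into a place whose type subsumes all of that transition's variables. I would discharge it by reusing the relation $Q$ of the preceding lemma, which projects precisely the tokens on $P_\Upsilon$-places: a firing consuming or producing a token that carries the whole of $i$ is matched under $Q$ by a visible step of $\project{\Upsilon}{N}$, forcing $t \in T_\Upsilon$. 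I would take care to restrict attention to $\Upsilon$ that is a genuine place type of $N$, so that $P_\Upsilon \neq \emptyset$; otherwise the joint occurrence of an $\Upsilon$-vector in a firing need not correspond to any place, and the two sides could only agree up to indices producing the empty trace.
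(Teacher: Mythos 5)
Your overall strategy is reasonable and, in fact, considerably more explicit than the paper's, which offers no proof of this corollary beyond the remark that it is a ``direct consequence of the simulation relation'' of the preceding lemma. You correctly reconstruct the nowhere-defined operation $\project{\Upsilon}{\eta}$, reduce the claim to the per-firing equivalence $\setsupp{i}\subseteq\rng{\psi} \iff \bigl(t\in T_\Upsilon \text{ and } \setsupp{i}\subseteq\rng{\restr{\psi}{\type_{\varset}^{-1}(\Upsilon)}}\bigr)$, and correctly isolate the only non-trivial part: that a firing whose binding covers a full $\Upsilon$-typed vector must involve a transition of $T_\Upsilon$, i.e., one adjacent to a place whose type subsumes all of $\Upsilon$.

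The gap lies in your discharge of that step. The structural invariant you invoke --- that in a \tjn identifiers of several types can be jointly bound by one firing only if they travel together on a shared place --- is not guaranteed by the generation rules. Rule R2 does enforce $\var{t}\subseteq\lambda$ for the freshly inserted place, but rule R6 (Identifier Introduction) explicitly requires $\lambda\cap\var{t}=\emptyset$ and attaches to $t$ a self-loop place of type $\lambda$ only: afterwards $t$ binds, in a single firing, identifiers of the types in $\lambda$ together with those in $\type(\var{t})$, while no place adjacent to $t$ carries both. The running example produces transition $E$ in exactly this way, and the paper computes $\mathit{Log}_{\set{x,y}}(\eta)=\set{\sequence{E}}$ even though the product and order tokens reach $E$ on separate places. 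In general a transition needs, for each $\lambda\in\Upsilon$, \emph{some} adjacent place mentioning $\lambda$ in order to bind a $\lambda$-identifier, but these may be different places for different $\lambda$, whereas membership in $T_\Upsilon$ demands a \emph{single} place $p$ with $\Upsilon\subseteq\alpha(p)$. Your fallback --- matching the firing under the relation $Q$ of the preceding lemma --- is circular here: $Q$ only describes how firings of transitions already known to lie in $T_\Upsilon$ are mirrored by the projection; it cannot be used to conclude $t\in T_\Upsilon$ from $\setsupp{i}\subseteq\rng{\psi}$, which is precisely what is at stake. Your restriction to $\Upsilon$ with $P_\Upsilon\neq\emptyset$ does not close the hole either, since the offending transition can coexist with a genuine $\Upsilon$-typed place elsewhere in the net; and note that the corollary is stated for arbitrary marked \tpnids, where the needed implication fails even more plainly. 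What your analysis has really surfaced is a missing hypothesis (e.g., that every transition whose variables span $\Upsilon$ belongs to $T_\Upsilon$) or the need for a different definition of $T_\Upsilon$ or of $\project{\Upsilon}{\eta}$; as written, the forward direction of your key equivalence does not follow.
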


Hence, it is not possible to observe whether an induced event log stems from the original model, or from its projection. 
Note that the projection may exhibit more behavior, so the reverse does not hold. 
In general, not any induced event log from the projection can be induced from the original model. 

In general, a projection does not need to be an atomic \tjn (that is, a \tjn that can be reduced by applying rules from \defref{typed_jackson_net} to a single transition). 
However, if the projection is atomic, then its structure is a transition-bordered \wfnet: 
a \wfnet that, instead of having source and sink places, has a set of start and finish transitions, such that pre-sets (resp., post-sets) of start (resp., finish) transitions are empty.
The closure of a transition-bordered \wfnet is constructed by adding a new source place $i$ so that each start transition consumes from $i$, and a new sink place $f$ so that each finish transition produces in $f$.

\begin{figure}[t]
	\centering
	\begin{tikzpicture}[->,>=stealth',auto,x=10mm,y=1cm,node distance=15mm and 3mm,thick,  every node/.style={scale=.9}]
		% Top: Models
		\node[tr, label=center:$M$] (M) {};
		\node[tr, right of = M, label=center:$M_1$, yshift=0.5cm, xshift=1.75cm]  (M1) {}; 
		\node[below of = M1,rotate=90,xshift=10mm] (d1) {$\cdots$};
		\node[tr, right of = M, label=center:$M_n$, yshift=-0.5cm, xshift=1.75cm]  (Mn) {}; 
		\node[tr, right of = M, label=center:$M'$, xshift=5cm]  (M') {};
		
		% Bot: Logs+Discovered Models
		\node[tr, below of = M, label=center:$L$, yshift=-0.5cm]    (L) {}; 
		\node[tr, right of = L, label=center:$L_1$, yshift=0.5cm]   (L1) {}; 
		\node[below of = L1,rotate=90,xshift=10mm] (d2) {$\cdots$};
		\node[tr, right of = L, label=center:$L_n$, yshift=-0.5cm]  (Ln) {};
		
		\node[tr, right of = L1, label=center:$D_1$, xshift=2cm] (D1) {};
		\node[below of = D1,rotate=90,xshift=10mm] (d3) {$\cdots$};
		\node[tr, right of = Ln, label=center:$D_n$, xshift=2cm] (Dn) {};
		\node[tr, below of = M', label=center:$D'$, yshift=-0.5cm]  (D') {};
		
		% "Text"
		\node[right of = M, xshift=-0.8cm] () {$\pi$};
		\node[right of = L, xshift=-0.8cm] () {$\pi$};
		
		\node[left of = M', xshift=0.8cm] () {$\uplus$};
		\node[left of = D',xshift=0.8cm] () {$\uplus$};	
		
		%		% Hacking in the legend
		%		\node[below of = L, label=right:{\textcolor{cadmiumorange}{$\longleftrightarrow$ bisimulation}}, yshift=0.2cm, xshift=-0.6cm]	(a) {};
		%		\node[below of = L, label=right:{\textcolor{blue-violet}{$\longrightarrow$~ generates}}, yshift=-0.2cm, xshift=-0.6cm]	(b) {};
		%		\node[right of = a, label=right:{\textcolor{red}{$\longrightarrow$ discovery}}, yshift=-0.2mm, xshift=1.5cm]	() {};
		%		\node[right of = b, label=right:{\textcolor{green1}{$\longrightarrow$ isomorphism}},xshift=1.5cm]	() {};
		% Empty node to fix caption overlap
		\node[below of = Ln,yshift=1.2cm] (a) {};
		
		%%% If we want to add text next to the arrow, use:
		%%  \path[->, draw=blue-violet] (M) edge node[anchor=center, xshift=-1cm]{\textcolor{blue-violet}{generates}} (L);
		
		% Normal
		\path[->, thin]
		($(M.east)+(0,.8mm)$) edge (M1)
		($(M.east)-(0,.8mm)$) edge (Mn)
		
		(L) edge (L1)
		(L) edge (Ln)
		
		(M1) edge ($(M'.west)+(0,.8mm)$)
		(Mn) edge ($(M'.west)-(0,.8mm)$)
		
		(D1) edge (D')
		(Dn) edge (D')
		;
		% Generates:
		\path[->, draw=blue-violet]
		(M) edge (L)
		(M1) edge (L1)
		(Mn) edge (Ln)
		;
		% Bisimulation
		\path[<->, draw=cadmiumorange]
		(M) edge[bend left=30] node[above,cadmiumorange,yshift=-.5mm]{$\sim^r$} (M')
		;
		% Discovery
		\path[->, draw=red]
		(L1) edge node[above,red]{$\overline{\mathit{disc}}$} (D1)
		(Ln) edge node[above,red]{$\overline{\mathit{disc}}$} (Dn)
		;
		% Isomorphism
		\path[<->, draw=green1]
		(M1) edge (D1)
		(Mn) edge (Dn)
		(M') edge (D')
		;
	\end{tikzpicture}

	\caption{Framework for rediscoverability of typed Jackson Nets. Model $M$ generates an event log $L$. Log projections $L_1 \ldots L_n$ are generated from projected nets $M_1 \ldots M_n$. %(\textcolor{blue-violet}{purple} arrows). 
		Discovery algorithm $\mathit{disc}$ results in nets $D_1 \ldots D_n$, isomorphic to $M_1 \ldots M_n$, 
		%(\textcolor{red}{red} arrows), 
		which can be composed in $D'$.
		$D'$ is isomorphic to $M'$ and thus to $M$.
%		\dbtodo{Rewrite later; should be a ``small story''.}
%		\textcolor{blue-violet}{Purple} arrows indicate that model $M$ generates log $L$.
%		Black arrows denote projections ($\pi$) and compositions ($\composeOperator$).
%		The \textcolor{cadmiumorange}{orange} arrow indicates bisimilarity.
%		Discovery is denoted with \textcolor{red}{red} arrows.
%		Finally, \textcolor{green1}{green} arrows imply isomorphism.
	}
	\figlabel{framework}
\end{figure}
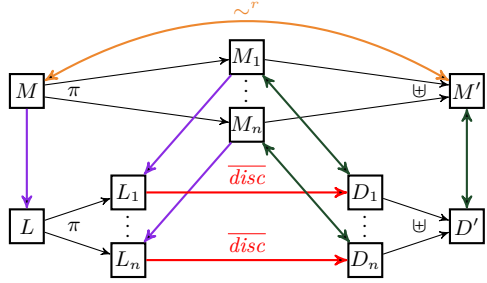

\begin{lemma}
	Let $N$ be a \tjn and $\project{\Upsilon}{N} = (P_\Upsilon, T_\Upsilon, F_\Upsilon, \alpha_\Upsilon, \beta_\Upsilon)$ for some $\Upsilon \subseteq \type_{\Lambda}(N)$ such that $\project{\Upsilon}{N}$ is atomic. 
	Let $\eta \in \mathcal{L}(N, \emptybag)$ be a firing sequence. 
	Then $\mathit{Log}_\Upsilon(\eta)$ is generated by $(N_\Upsilon, \emptybag)$ with $N_\Upsilon = (P_\Upsilon \union \{i, f\},T_\Upsilon,F_\Upsilon \{ (i,t) 
	\mid \pre{t}=\emptyset \} \union \{(t,f) \mid \post{t} = \emptyset\})$.
\end{lemma}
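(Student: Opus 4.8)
The plan is to read each trace in $\mathit{Log}_\Upsilon(\eta)$ as the complete life-cycle of a single object of combined type $\Upsilon$ and to replay that life-cycle in the closure $N_\Upsilon$. First I would fix an identifier vector $\vec a \in (\Id(\eta)\intersect I(\Upsilon))^{\length{\Upsilon}}$, so the corresponding trace is $\proj{\eta}{\vec a}$, obtained by keeping exactly those firings $(t,\psi)$ of $\eta$ whose binding covers every component of $\vec a$. Since each component of $\vec a$ is an identifier of a type in $\Upsilon$, and such identifiers can only reside in places of $P_\Upsilon$, every retained firing belongs to a transition in $T_\Upsilon$; hence $\proj{\eta}{\vec a}$ is a subsequence of the $T_\Upsilon$-firings of $\eta$, with bindings forgotten so that it becomes a plain transition sequence.

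Next I would invoke the preceding lemma: after hiding $U = T_N\setminus T_\Upsilon$, the net $\project{\Upsilon}{N}$ rootedly weakly simulates $N$, so the subsequence of $T_\Upsilon$-firings of $\eta$ is itself a firing sequence of $\project{\Upsilon}{N}$ from $\emptybag$. Because $\project{\Upsilon}{N}$ is atomic, it is a t-JN and therefore identifier sound and live by \thmref{tjn-is-identifier-sound}. I would use $\Upsilon$-soundness to establish the \emph{single-token invariant} for $\vec a$: the object is emitted exactly once, by a start transition of $\project{\Upsilon}{N}$ (empty preset); between emission and collection it is carried at every step by exactly one token in a single place of $P_\Upsilon$; and it is collected exactly once, by a finish transition (empty postset). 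Consequently the firings covering $\vec a$, listed in their $\eta$-order, trace a single token moving through $\project{\Upsilon}{N}$ along $F_\Upsilon$, from a start transition to a finish transition.

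It then remains to replay this path in $N_\Upsilon$, the closure that adds a source $i$ feeding every start transition and a sink $f$ fed by every finish transition; since $\project{\Upsilon}{N}$ is atomic, $N_\Upsilon$ is a block-structured, hence sound, WF-net. Starting from the marking $\bag{i}$ that puts one token on the source, the first firing of $\proj{\eta}{\vec a}$ is a start transition, enabled by the source token; each subsequent firing consumes the object's token from its current $P_\Upsilon$-place and produces it in the next, which the single-token invariant guarantees is enabled step by step; and the final firing is a finish transition, depositing the token on $f$. Thus $\proj{\eta}{\vec a}$ is a firing sequence of $N_\Upsilon$, and ranging over all admissible $\vec a$ gives $\mathit{Log}_\Upsilon(\eta)\subseteq \mathcal{L}(N_\Upsilon,\bag{i})$, i.e.\ $\mathit{Log}_\Upsilon(\eta)$ is generated by the closure $N_\Upsilon$ (with its source place marked).

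The main obstacle is the single-token invariant: showing that the per-object filter $\proj{\eta}{\vec a}$ isolates one coherent control-flow path rather than an arbitrary interleaving of several objects. This leans on the standing assumptions that there are no black tokens and that variables are in bijection with types, so each active object contributes exactly one token to each relevant $P_\Upsilon$-place, together with the t-JN generation rules, which make the closure block-structured and sound and thereby guarantee that start and finish transitions are well-defined and that each object's journey is a complete source-to-sink run. The case $\length{\Upsilon}>1$ (an interaction object) is handled identically, since atomicity of $\project{\Upsilon}{N}$ ensures that even a combined-type object is emitted and collected by single transitions.
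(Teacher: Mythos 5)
Your overall architecture matches the paper's: fix the identifier vector $\vec a$ behind a trace of $\mathit{Log}_\Upsilon(\eta)$, use identifier soundness to argue that only the first (resp.\ last) firing of the filtered sequence is an emitting (resp.\ collecting) transition for $\Upsilon$ --- hence the unique start (resp.\ finish) transition of the closure --- and then replay the filtered sequence in $N_\Upsilon$ via a weak-simulation argument that hides all firings whose bindings do not cover $\vec a$. The paper does exactly this, except that it skips your intermediate appeal to the projection lemma and instead directly defines the simulation relation $R = \set{(M,m) \mid \forall p : M(p)(\vec\upsilon)= m(p)}$ between markings of $N$ and markings of $N_\Upsilon$.

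The genuine gap is your \emph{single-token invariant}: the claim that between emission and collection the object $\vec a$ ``is carried at every step by exactly one token in a single place of $P_\Upsilon$'' is false for \tjns in general. Place duplication (rule R3) and parallel blocks produce configurations in which one and the same identifier vector simultaneously resides in several places with identical pre- and post-sets, and the projection $\project{\Upsilon}{N}$ (hence its closure $N_\Upsilon$) can itself contain such parallelism when it is a non-trivial atomic \tjn. Your picture of ``a single token moving along a path from a start to a finish transition'' therefore breaks down on any projection with concurrency: the correct object of study is the whole sub-marking of tokens carrying $\vec a$, i.e.\ the multiset $p \mapsto M(p)(\vec a)$ over $P_\Upsilon$, which is precisely the paper's relation $R$. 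You half-acknowledge this when you later say each object contributes ``exactly one token to each relevant $P_\Upsilon$-place,'' but that contradicts the invariant you actually use in the replay step. Relatedly, your justification that every retained firing lies in $T_\Upsilon$ (``such identifiers can only reside in places of $P_\Upsilon$'') is too quick, since an identifier of a single type $\lambda \in \Upsilon$ can sit in a place whose type does not contain all of $\Upsilon$; what matters is that a firing whose binding covers \emph{all} of $\vec a$ must, in a \tjn, be adjacent to a place of type containing $\Upsilon$. Finally, note that you replay from the marking $[i]$ whereas the statement (and the paper's own proof) uses $\emptybag$; your reading is the sensible one, since a start transition consuming from an unmarked source could never fire, but it is worth flagging that you are proving a slightly amended statement.
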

\begin{proof} (sketch)
Let $\sigma \in \mathit{Log}_\Upsilon(\eta)$.
By construction, each firing sequence in $\mathit{Log}_\Upsilon(\eta)$ has some corresponding identifier vector that generated the sequence.
Assume $\vec\upsilon \in \idset^{|\Upsilon|}$ is such a vector for $\sigma$. 

Observe that for any transition $t \in T$ if $\pre{t} = \emptyset$, $\newvar{t} \intersect \Upsilon \neq \emptyset$, and similarly, if $\post{t} = \emptyset$, $\delvar{t} \intersect \Upsilon \neq \emptyset$. As $N$ is identifier sound, only $\pre{\sigma(1)} = \emptyset$ and $\post{\sigma(\length{\sigma})} = \emptyset$.
Define relation $R = \{(M,m) \mid \forall p \in P : M(p)(\upsilon)= m(p) \}$ and  $U = \{ (t,\psi) \mid \upsilon \not\subseteq \rng{\psi} \}$, i.e., $U$ contains all transitions that do not belong to $\sigma$. Then $R$ is a weak simulation, i.e., $\hide{U}{\transitionsystem{N,\emptybag}} \preccurlyeq^r_R \transitionsystem{N_\Upsilon,\emptybag}$ and thus $\pnenabled{(N_\Upsilon,\emptybag)}{\sigma}$.
\end{proof}

Given a set of types $\Upsilon$, if its projection is atomic, the projection can be transformed into a workflow net, and for any firing sequence of the original net, this \wfnet can generate the $\Upsilon$-induced event log.
Suppose we have a discovery algorithm $\mathit{disc}$ that can rediscover models, i.e., given an event log $L$ that was generated by some model $M$, then $\mathit{disc}$ returns the original model. 
Rediscoverability of an algorithm requires some property $P_{\mathit{disc}}(M)$ on the generating model $M$, and some property $Q_{\mathit{disc}}(L, M)$ on the quality of event log $L$ with respect to the generating model $M$. 
In other words, $P(M)$ and $Q(L,M)$ are premises to conclude rediscoverability for discovery algorithm $\mathit{disc}$.
For example, $\alpha$-miner~\cite{AalstWM04} requires for $P(M)$ that model $M$ is well-structured, and for $Q(L,M)$ that event log $L$ is directly-follows complete with respect to model $M$. 
Similarly, Inductive Miner~\cite{Leemans2013} requires the generating model $M$ to be a process tree  without silent actions or self-loops ($P(M)$), and that event log $L$ is directly-follows complete with respect to the original model $M$ ($Q(L,M)$).

\begin{definition}[Rediscovery]\deflabel{rediscovery}
An algorithm $\mathit{disc}$ can \emph{rediscover} \wfnet $W=(P,T,F,in,out)$ from event log $L \subseteq \seq{T}$ if $P_{\mathit{disc}}(W)$ and $Q_{\mathit{disc}}(L,W)$ imply $\mathit{disc}(L) \leftrightsquigarrow W$.
\end{definition}

Thus, suppose there exists a discovery algorithm $\mathit{disc}$ that is -- under conditions $P$ and $Q$ -- able to reconstruct a workflow model given an event log. 
In other words, given an event log $L$ generated by some model $M$, $\mathit{disc}$ returns a model that is isomorphic to the generating model. 
Now, suppose we have a firing sequence $\eta$ of some \tjn $N$, and some projection $\Upsilon$. Then, if $P(\project{\Upsilon}{N})$, and $Q(\mathit{Log}_\Upsilon(\eta),\project{\Upsilon}{N})$, then $\mathit{disc}$ returns a model that is isomorphic to the closure of $\project{\Upsilon}{N}$, as $\mathit{disc}$ only returns \wfnets. 
With $\overline{\mathit{disc}}$ we denote the model where the source and sink places are removed, i.e., $\overline{\mathit{disc}} \leftrightsquigarrow \project{\Upsilon}{N}$.
Then, as shown in \figref{framework}, if we discover for every possible combination of types, i.e., the subset-closed set of all type combinations, 
a model that is isomorphic to the type-projected model, then the composition results in a model that is bisimilar to the original model.

\begin{theorem}[Rediscoverability of typed Jackson Nets]
Let $N$ be a \tjn, and let $\eta \in \mathcal{L}(N,\emptyset)$ without minor places. 
Let $\mathit{disc}$ be a discovery algorithm with properties $P$ and $Q$ that satisfy \defref{rediscovery}. 
If for all $\emptyset \subset \Upsilon \subseteq \type_{\Lambda}(N)$ 
	the $\Upsilon$-projection is atomic and 
	satisfies conditions $P(\project{\Upsilon}{N})$ and $Q(\mathit{Log}_{\Upsilon}(\eta)), \project{\Upsilon}{N})$,
then $\transitionsystem{N,\emptyset} 
%\sim^r 
\leftrightsquigarrow
\transitionsystem{N',\emptyset}$ with $N' = \biguplus_{\emptyset \subset \Upsilon \subseteq \type_{\Lambda}(N)} \overline{\mathit{disc}}(\mathit{Log}_\Upsilon(\eta))$.

%
%\begin{equation*}
%	\transitionsystem{N,\emptyset} \sim^r \biguplus_{\emptyset \subset \Upsilon \subseteq \type_{\Lambda}(N)} \overline{\mathit{disc}}(\mathit{Log}_\Upsilon(\eta))
%\end{equation*}
\end{theorem}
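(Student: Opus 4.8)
The plan is to run the discovery algorithm on each induced log in isolation, recognise each output as a de-closured isomorphic copy of the corresponding type projection of $N$, glue these copies together with the composition operator, and finally invoke reconstructability to recover $N$ itself.

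First I would fix a type set $\emptyset \subset \Upsilon \subseteq \type_{\Lambda}(N)$ and rediscover its projection. By the type-projection lemma, $\project{\Upsilon}{N}$ is again a \tjn, and by hypothesis it is atomic, so its closure $N_\Upsilon$ -- the transition-bordered WF-net of $\project{\Upsilon}{N}$ with fresh source $i$ and sink $f$ -- is a block-structured WF-net on which $\mathit{disc}$ is defined. The preceding lemma on atomic projections gives $\mathit{Log}_\Upsilon(\eta) \subseteq \mathcal{L}(N_\Upsilon,\emptyset)$, i.e.\ the log is generated by $(N_\Upsilon,\emptyset)$. Reading the hypotheses $P(\project{\Upsilon}{N})$ and $Q(\mathit{Log}_\Upsilon(\eta),\project{\Upsilon}{N})$ as the rediscovery premises $P_{\mathit{disc}}(N_\Upsilon)$ and $Q_{\mathit{disc}}(\mathit{Log}_\Upsilon(\eta),N_\Upsilon)$, \defref{rediscovery} yields $\mathit{disc}(\mathit{Log}_\Upsilon(\eta)) \leftrightsquigarrow N_\Upsilon$. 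Deleting $i$ and $f$ on both sides gives $\overline{\mathit{disc}}(\mathit{Log}_\Upsilon(\eta)) \leftrightsquigarrow \project{\Upsilon}{N}$ on underlying graphs; and since every place of $\project{\Upsilon}{N}$ carries the uniform type $\Upsilon$ and every arc the matching $\Upsilon$-inscription (\defref{type_projection}), this graph isomorphism lifts to a \tpnid isomorphism by re-typing.

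Next I would assemble the pieces. Writing $\widehat{N} = \biguplus_{\emptyset \subset \Upsilon \subseteq \type_{\Lambda}(N)} \project{\Upsilon}{N}$, I claim $N' \leftrightsquigarrow \widehat{N}$: following \defref{composition}, $\biguplus$ is a componentwise union, and because transitions are identified across projections by the activity they record (inherited from $N$), the per-$\Upsilon$ isomorphisms of the previous paragraph can be chosen identical on transitions and therefore glue into one global isomorphism. Finally, $\widehat{N}$ is exactly the net $N'$ of \thmref{reconstructability}, so $\transitionsystem{\widehat N,\emptyset}\sim^r\transitionsystem{N,\emptyset}$; because $\eta$ exposes no spurious minor places, so that $\lfloor P\rfloor=\lfloor P'\rfloor$ in \corref{reconstructability}, this strengthens to $\widehat{N} \leftrightsquigarrow N$. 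Chaining the two isomorphisms gives $N' \leftrightsquigarrow N$, and isomorphic nets induce isomorphic transition systems, whence $\transitionsystem{N,\emptyset}\leftrightsquigarrow\transitionsystem{N',\emptyset}$.

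The hard part will be the gluing in the third paragraph. Rediscovery only delivers isomorphism projection by projection, so I must argue that $\mathit{disc}$ labels shared transitions consistently across the different logs $\mathit{Log}_\Upsilon(\eta)$ -- otherwise the componentwise isomorphisms need not agree on the transitions merged by $\biguplus$ -- and that the uniform typing of each projection is enough to turn the untyped discovered WF-net back into the correctly typed \tpnid. This is precisely where the atomicity of every projection and the absence of extra minor places are needed, so that \corref{reconstructability} applies verbatim rather than only up to bisimulation.
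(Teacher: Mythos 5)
Your proposal is correct and follows essentially the same route as the paper's own sketch: per-projection rediscovery yields $\mathit{disc}(\mathit{Log}_\Upsilon(\eta))$ isomorphic to the closure of $\project{\Upsilon}{N}$, stripping the source and sink places gives $\overline{\mathit{disc}}(\mathit{Log}_\Upsilon(\eta)) \leftrightsquigarrow \project{\Upsilon}{N}$, the isomorphisms are combined over all $\Upsilon$ under $\biguplus$, and \corref{reconstructability} closes the argument. You are in fact more careful than the paper on the one genuinely delicate step --- that the per-projection isomorphisms must agree on shared transitions for the componentwise union to inherit a global isomorphism --- which the paper passes over with ``combining the results gives \ldots''.
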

\begin{proof} (sketch)
Let $\emptyset \subset \Upsilon \subseteq \type_{\Lambda}(N)$ be a set of types in $N$. 
Since $P(\project{\Upsilon}{N})$ and $Q(\mathit{Log}_{\Upsilon}(\eta)), \project{\Upsilon}{N})$%
%, we have $\project{\Upsilon}{N} \leftrightsquigarrow \mathit{disc}(\mathit{Log}_{\Upsilon}(\eta))$. 
the closure of $\project{\Upsilon}{N}$ and $\mathit{disc}(\mathit{Log}_{\Upsilon}(\eta))$ are isomorphic.
From the closure, places $\inp$ and $\outp$ exist with $\pre{\inp} = \emptyset = \post{\outp{}}$. As the nets are isomorphic, we have $\proj{\Upsilon}{N} \leftrightsquigarrow \overline{\mathit{disc}}(\mathit{Log}_{\Upsilon}(\eta))$.
Combining the results gives
$\biguplus_{\emptyset \subset \Upsilon \subseteq \type_{\Lambda}(N)} \overline{\mathit{disc}}(\mathit{Log}_\Upsilon(\eta)) \leftrightsquigarrow
\biguplus_{\emptyset \subset \Upsilon \subseteq \type_{\Lambda}(N)} \project{\Upsilon}{N}$.
The statement then follows directly from \corref{reconstructability}.
%The isomorphism implies that if the right hand side is rooted bisimilar to $(N,\emptyset)$, the left hand side is rooted bisimilar to $(N,\emptyset)$ as well, which proves the statement.
\end{proof}

\section{Conclusion} 
\seclabel{discussion_conclusion}
%%%%%%%%%%%%%%%%%%%%%%%%%%%%%%%%%%%%%%%%%%%%%%%%%%%%%%%%%%%%%%%%%%%%%%%%%%%%%%%
% 2023-03-17: Artem
In this paper, we studied typed Jackson Nets to model systems of interacting processes, a class of well-structured process models describing manipulations of object identifiers.
As we show, this class of nets has an important property of reconstructability.
In other words, the composition of the projections on all possible type combinations returns the model of the original system.
Ignoring the interactions between processes results in less accurate, or even wrong, models.
Similar problems occur in the discovery of systems of interacting processes, such as  object-centric process discovery, where event logs are flattened for each object.

% 2023-03-17: Artem
This paper provides a formal foundation for the composition of block-structured nets, and uses this to develop a framework for the discovery of systems of interacting processes.
We link the notion of event logs used for process discovery to system executions, and show that it is not possible to observe whether an event log is generated by a system of interacting processes, or by a projection of the system.
These properties form the key ingredients of the framework.
We show under what conditions a process discovery algorithm (that guarantees rediscoverability) can be used to discover the individual processes and their interactions, and how these can be combined to rediscover a model of interacting processes that is bisimilar to the original system that generated the event logs.

%Similar to Agent Miner~\cite{TourPKS2022agentdiscovery}, our framework applies a divide-and-conquer strategy.
%It proceeds by projecting the input event log into several event logs, one for each %combination of objects (the divide step), then discovering models from the projected event logs (the conquer step), and, finally, composing the discovered models into the overall system.
%Differently from Agent Miner, the division step is driven by objects manipulated by and not agents participating in the processes.
%Given the discovery algorithms used to conquer the projected event logs guarantee rediscoverability, the overall algorithm rediscovers the original system.
%This result is grounded in the reconstructability property of typed Jackson Nets.
%The discovered systems are live and identifier sound by construction, which are two desired correctness properties for business processes.
% 2023-03-17: Artem
Although typed Jackson Nets have less expressive power than formalisms like Object-centric Petri nets~\cite{aalstB20_discovering}, proclets~\cite{Fahland2019} or interacting artifacts~\cite{LuNWF15}, this paper shows the limitations and potential pitfalls of discovering interacting processes. 
This work aims to lay formal foundations for object-centric process discovery.
As a next step, we plan to implement the framework and tune our algorithms to discover useful models from industrial datasets.

\smallskip
\noindent
\textbf{Acknowledgements.}
Artem Polyvyanyy was in part supported by the Australian Research Council project DP220101516.

	\bibliographystyle{setup/splncs04}
	\bibliography{references}
\end{document}